\documentclass[runningheads]{llncs}
\usepackage[T1]{fontenc}
\usepackage{comment}

\usepackage{environ}
\usepackage{xspace}

\usepackage{algorithm}
\usepackage{algpseudocode}
\usepackage{amscd}
\usepackage{amsfonts}
\usepackage{amsmath}
\usepackage{amssymb}
\usepackage{booktabs}
\usepackage{color}
\usepackage{datetime}
\usepackage{epsfig}
\usepackage{graphicx}
\usepackage[pdfa,unicode]{hyperref}
\usepackage[capitalise]{cleveref}
\usepackage{latexsym}
\usepackage{marginnote}
\usepackage{multirow}
\usepackage{paralist}
\usepackage{pifont}
\usepackage{placeins}
\usepackage{rotate}
\usepackage[draft]{todonotes}
\usepackage{tikz}
\usepackage{url}
\usepackage{varwidth}
\usepackage{verbatim} 
\usepackage{wrapfig}
\usepackage{xcolor,colortbl}
\usepackage{xspace}
 \usetikzlibrary{calc}
\usepackage{setspace}
\usepackage{subcaption}
\usepackage[inline]{enumitem}
\usepackage{soul}
\usepackage{microtype}
\usepackage[subtle]{savetrees}

\Crefname{definition}{Def.}{Defs.}

\makeatletter%
\@mparswitchfalse%
\makeatother%
\reversemarginpar%

\definecolor {snow}                {rgb}{1.00,0.98,0.98}
\definecolor {ghostwhite}          {rgb}{0.97,0.97,1.00}
\definecolor {whitesmoke}          {rgb}{0.96,0.96,0.96}
\definecolor {gainsboro}           {rgb}{0.86,0.86,0.86}
\definecolor {floralwhite}         {rgb}{1.00,0.98,0.94}
\definecolor {oldlace}             {rgb}{0.99,0.96,0.90}
\definecolor {linen}               {rgb}{0.98,0.94,0.90}
\definecolor {antiquewhite}        {rgb}{0.98,0.92,0.84}
\definecolor {papayawhip}          {rgb}{1.00,0.94,0.84}
\definecolor {blanchedalmond}      {rgb}{1.00,0.92,0.80}
\definecolor {bisque}              {rgb}{1.00,0.89,0.77}
\definecolor {peachpuff}           {rgb}{1.00,0.85,0.73}
\definecolor {navajowhite}         {rgb}{1.00,0.87,0.68}
\definecolor {moccasin}            {rgb}{1.00,0.89,0.71}
\definecolor {cornsilk}            {rgb}{1.00,0.97,0.86}
\definecolor {ivory}               {rgb}{1.00,1.00,0.94}
\definecolor {lemonchiffon}        {rgb}{1.00,0.98,0.80}
\definecolor {seashell}            {rgb}{1.00,0.96,0.93}
\definecolor {honeydew}            {rgb}{0.94,1.00,0.94}
\definecolor {mintcream}           {rgb}{0.96,1.00,0.98}
\definecolor {azure}               {rgb}{0.94,1.00,1.00}
\definecolor {aliceblue}           {rgb}{0.94,0.97,1.00}
\definecolor {lavender}            {rgb}{0.90,0.90,0.98}
\definecolor {lavenderblush}       {rgb}{1.00,0.94,0.96}
\definecolor {mistyrose}           {rgb}{1.00,0.89,0.88}
\definecolor {white}               {rgb}{1.00,1.00,1.00}
\definecolor {black}               {rgb}{0.00,0.00,0.00}
\definecolor {darkslategray}       {rgb}{0.18,0.31,0.31}
\definecolor {dimgray}             {rgb}{0.41,0.41,0.41}
\definecolor {slategray}           {rgb}{0.44,0.50,0.56}
\definecolor {lightslategray}      {rgb}{0.47,0.53,0.60}
\definecolor {gray}                {rgb}{0.75,0.75,0.75}
\definecolor {lightgrey}           {rgb}{0.83,0.83,0.83}
\definecolor {midnightblue}        {rgb}{0.10,0.10,0.44}
\definecolor {navy}                {rgb}{0.00,0.00,0.50}
\definecolor {cornflowerblue}      {rgb}{0.39,0.58,0.93}
\definecolor {darkslateblue}       {rgb}{0.28,0.24,0.55}
\definecolor {slateblue}           {rgb}{0.42,0.35,0.80}
\definecolor {mediumslateblue}     {rgb}{0.48,0.41,0.93}
\definecolor {lightslateblue}      {rgb}{0.52,0.44,1.00}
\definecolor {mediumblue}          {rgb}{0.00,0.00,0.80}
\definecolor {royalblue}           {rgb}{0.25,0.41,0.88}
\definecolor {blue}                {rgb}{0.00,0.00,1.00}
\definecolor {dodgerblue}          {rgb}{0.12,0.56,1.00}
\definecolor {deepskyblue}         {rgb}{0.00,0.75,1.00}
\definecolor {skyblue}             {rgb}{0.53,0.81,0.92}
\definecolor {lightskyblue}        {rgb}{0.53,0.81,0.98}
\definecolor {steelblue}           {rgb}{0.27,0.51,0.71}
\definecolor {lightsteelblue}      {rgb}{0.69,0.77,0.87}
\definecolor {lightblue}           {rgb}{0.68,0.85,0.90}
\definecolor {powderblue}          {rgb}{0.69,0.88,0.90}
\definecolor {paleturquoise}       {rgb}{0.69,0.93,0.93}
\definecolor {darkturquoise}       {rgb}{0.00,0.81,0.82}
\definecolor {mediumturquoise}     {rgb}{0.28,0.82,0.80}
\definecolor {turquoise}           {rgb}{0.25,0.88,0.82}
\definecolor {cyan}                {rgb}{0.00,1.00,1.00}
\definecolor {lightcyan}           {rgb}{0.88,1.00,1.00}
\definecolor {cadetblue}           {rgb}{0.37,0.62,0.63}
\definecolor {mediumaquamarine}    {rgb}{0.40,0.80,0.67}
\definecolor {aquamarine}          {rgb}{0.50,1.00,0.83}
\definecolor {darkgreen}           {rgb}{0.00,0.39,0.00}
\definecolor {darkolivegreen}      {rgb}{0.33,0.42,0.18}
\definecolor {darkseagreen}        {rgb}{0.56,0.74,0.56}
\definecolor {seagreen}            {rgb}{0.18,0.55,0.34}
\definecolor {mediumseagreen}      {rgb}{0.24,0.70,0.44}
\definecolor {lightseagreen}       {rgb}{0.13,0.70,0.67}
\definecolor {palegreen}           {rgb}{0.60,0.98,0.60}
\definecolor {springgreen}         {rgb}{0.00,1.00,0.50}
\definecolor {lawngreen}           {rgb}{0.49,0.99,0.00}
\definecolor {green}               {rgb}{0.00,1.00,0.00}
\definecolor {chartreuse}          {rgb}{0.50,1.00,0.00}
\definecolor {mediumspringgreen}   {rgb}{0.00,0.98,0.60}
\definecolor {greenyellow}         {rgb}{0.68,1.00,0.18}
\definecolor {limegreen}           {rgb}{0.20,0.80,0.20}
\definecolor {yellowgreen}         {rgb}{0.60,0.80,0.20}
\definecolor {forestgreen}         {rgb}{0.13,0.55,0.13}
\definecolor {olivedrab}           {rgb}{0.42,0.56,0.14}
\definecolor {darkkhaki}           {rgb}{0.74,0.72,0.42}
\definecolor {khaki}               {rgb}{0.94,0.90,0.55}
\definecolor {palegoldenrod}       {rgb}{0.93,0.91,0.67}
\definecolor {lightgoldenrodyellow} {rgb}{0.98,0.98,0.82}
\definecolor {lightyellow}         {rgb}{1.00,1.00,0.88}
\definecolor {yellow}              {rgb}{1.00,1.00,0.00}
\definecolor {gold}                {rgb}{1.00,0.84,0.00}
\definecolor {lightgoldenrod}      {rgb}{0.93,0.87,0.51}
\definecolor {goldenrod}           {rgb}{0.85,0.65,0.13}
\definecolor {darkgoldenrod}       {rgb}{0.72,0.53,0.04}
\definecolor {rosybrown}           {rgb}{0.74,0.56,0.56}
\definecolor {indianred}           {rgb}{0.80,0.36,0.36}
\definecolor {saddlebrown}         {rgb}{0.55,0.27,0.07}
\definecolor {sienna}              {rgb}{0.63,0.32,0.18}
\definecolor {peru}                {rgb}{0.80,0.52,0.25}
\definecolor {burlywood}           {rgb}{0.87,0.72,0.53}
\definecolor {beige}               {rgb}{0.96,0.96,0.86}
\definecolor {wheat}               {rgb}{0.96,0.87,0.70}
\definecolor {sandybrown}          {rgb}{0.96,0.64,0.38}
\definecolor {tan}                 {rgb}{0.82,0.71,0.55}
\definecolor {chocolate}           {rgb}{0.82,0.41,0.12}
\definecolor {firebrick}           {rgb}{0.70,0.13,0.13}
\definecolor {brown}               {rgb}{0.65,0.16,0.16}
\definecolor {darksalmon}          {rgb}{0.91,0.59,0.48}
\definecolor {salmon}              {rgb}{0.98,0.50,0.45}
\definecolor {lightsalmon}         {rgb}{1.00,0.63,0.48}
\definecolor {orange}              {rgb}{1.00,0.65,0.00}
\definecolor {darkorange}          {rgb}{1.00,0.55,0.00}
\definecolor {coral}               {rgb}{1.00,0.50,0.31}
\definecolor {lightcoral}          {rgb}{0.94,0.50,0.50}
\definecolor {tomato}              {rgb}{1.00,0.39,0.28}
\definecolor {orangered}           {rgb}{1.00,0.27,0.00}
\definecolor {red}                 {rgb}{1.00,0.00,0.00}
\definecolor {hotpink}             {rgb}{1.00,0.41,0.71}
\definecolor {deeppink}            {rgb}{1.00,0.08,0.58}
\definecolor {pink}                {rgb}{1.00,0.75,0.80}
\definecolor {lightpink}           {rgb}{1.00,0.71,0.76}
\definecolor {palevioletred}       {rgb}{0.86,0.44,0.58}
\definecolor {maroon}              {rgb}{0.69,0.19,0.38}
\definecolor {mediumvioletred}     {rgb}{0.78,0.08,0.52}
\definecolor {violetred}           {rgb}{0.82,0.13,0.56}
\definecolor {magenta}             {rgb}{1.00,0.00,1.00}
\definecolor {violet}              {rgb}{0.93,0.51,0.93}
\definecolor {plum}                {rgb}{0.87,0.63,0.87}
\definecolor {orchid}              {rgb}{0.85,0.44,0.84}
\definecolor {mediumorchid}        {rgb}{0.73,0.33,0.83}
\definecolor {darkorchid}          {rgb}{0.60,0.20,0.80}
\definecolor {darkviolet}          {rgb}{0.58,0.00,0.83}
\definecolor {blueviolet}          {rgb}{0.54,0.17,0.89}
\definecolor {purple}              {rgb}{0.63,0.13,0.94}
\definecolor {mediumpurple}        {rgb}{0.58,0.44,0.86}
\definecolor {thistle}             {rgb}{0.85,0.75,0.85}
\definecolor {snow2}               {rgb}{0.93,0.91,0.91}
\definecolor {snow3}               {rgb}{0.80,0.79,0.79}
\definecolor {snow4}               {rgb}{0.55,0.54,0.54}
\definecolor {seashell2}           {rgb}{0.93,0.90,0.87}
\definecolor {seashell3}           {rgb}{0.80,0.77,0.75}
\definecolor {seashell4}           {rgb}{0.55,0.53,0.51}
\definecolor {antiquewhite1}       {rgb}{1.00,0.94,0.86}
\definecolor {antiquewhite2}       {rgb}{0.93,0.87,0.80}
\definecolor {antiquewhite3}       {rgb}{0.80,0.75,0.69}
\definecolor {antiquewhite4}       {rgb}{0.55,0.51,0.47}
\definecolor {bisque2}             {rgb}{0.93,0.84,0.72}
\definecolor {bisque3}             {rgb}{0.80,0.72,0.62}
\definecolor {bisque4}             {rgb}{0.55,0.49,0.42}
\definecolor {peachpuff2}          {rgb}{0.93,0.80,0.68}
\definecolor {peachpuff3}          {rgb}{0.80,0.69,0.58}
\definecolor {peachpuff4}          {rgb}{0.55,0.47,0.40}
\definecolor {navajowhite2}        {rgb}{0.93,0.81,0.63}
\definecolor {navajowhite3}        {rgb}{0.80,0.70,0.55}
\definecolor {navajowhite4}        {rgb}{0.55,0.47,0.37}
\definecolor {lemonchiffon2}       {rgb}{0.93,0.91,0.75}
\definecolor {lemonchiffon3}       {rgb}{0.80,0.79,0.65}
\definecolor {lemonchiffon4}       {rgb}{0.55,0.54,0.44}
\definecolor {cornsilk2}           {rgb}{0.93,0.91,0.80}
\definecolor {cornsilk3}           {rgb}{0.80,0.78,0.69}
\definecolor {cornsilk4}           {rgb}{0.55,0.53,0.47}
\definecolor {ivory2}              {rgb}{0.93,0.93,0.88}
\definecolor {ivory3}              {rgb}{0.80,0.80,0.76}
\definecolor {ivory4}              {rgb}{0.55,0.55,0.51}
\definecolor {honeydew2}           {rgb}{0.88,0.93,0.88}
\definecolor {honeydew3}           {rgb}{0.76,0.80,0.76}
\definecolor {honeydew4}           {rgb}{0.51,0.55,0.51}
\definecolor {lavenderblush2}      {rgb}{0.93,0.88,0.90}
\definecolor {lavenderblush3}      {rgb}{0.80,0.76,0.77}
\definecolor {lavenderblush4}      {rgb}{0.55,0.51,0.53}
\definecolor {mistyrose2}          {rgb}{0.93,0.84,0.82}
\definecolor {mistyrose3}          {rgb}{0.80,0.72,0.71}
\definecolor {mistyrose4}          {rgb}{0.55,0.49,0.48}
\definecolor {azure2}              {rgb}{0.88,0.93,0.93}
\definecolor {azure3}              {rgb}{0.76,0.80,0.80}
\definecolor {azure4}              {rgb}{0.51,0.55,0.55}
\definecolor {slateblue1}          {rgb}{0.51,0.44,1.00}
\definecolor {slateblue2}          {rgb}{0.48,0.40,0.93}
\definecolor {slateblue3}          {rgb}{0.41,0.35,0.80}
\definecolor {slateblue4}          {rgb}{0.28,0.24,0.55}
\definecolor {royalblue1}          {rgb}{0.28,0.46,1.00}
\definecolor {royalblue2}          {rgb}{0.26,0.43,0.93}
\definecolor {royalblue3}          {rgb}{0.23,0.37,0.80}
\definecolor {royalblue4}          {rgb}{0.15,0.25,0.55}
\definecolor {blue2}               {rgb}{0.00,0.00,0.93}
\definecolor {blue4}               {rgb}{0.00,0.00,0.55}
\definecolor {dodgerblue2}         {rgb}{0.11,0.53,0.93}
\definecolor {dodgerblue3}         {rgb}{0.09,0.45,0.80}
\definecolor {dodgerblue4}         {rgb}{0.06,0.31,0.55}
\definecolor {steelblue1}          {rgb}{0.39,0.72,1.00}
\definecolor {steelblue2}          {rgb}{0.36,0.67,0.93}
\definecolor {steelblue3}          {rgb}{0.31,0.58,0.80}
\definecolor {steelblue4}          {rgb}{0.21,0.39,0.55}
\definecolor {deepskyblue2}        {rgb}{0.00,0.70,0.93}
\definecolor {deepskyblue3}        {rgb}{0.00,0.60,0.80}
\definecolor {deepskyblue4}        {rgb}{0.00,0.41,0.55}
\definecolor {skyblue1}            {rgb}{0.53,0.81,1.00}
\definecolor {skyblue2}            {rgb}{0.49,0.75,0.93}
\definecolor {skyblue3}            {rgb}{0.42,0.65,0.80}
\definecolor {skyblue4}            {rgb}{0.29,0.44,0.55}
\definecolor {lightskyblue1}       {rgb}{0.69,0.89,1.00}
\definecolor {lightskyblue2}       {rgb}{0.64,0.83,0.93}
\definecolor {lightskyblue3}       {rgb}{0.55,0.71,0.80}
\definecolor {lightskyblue4}       {rgb}{0.38,0.48,0.55}
\definecolor {slategray1}          {rgb}{0.78,0.89,1.00}
\definecolor {slategray2}          {rgb}{0.73,0.83,0.93}
\definecolor {slategray3}          {rgb}{0.62,0.71,0.80}
\definecolor {slategray4}          {rgb}{0.42,0.48,0.55}
\definecolor {lightsteelblue1}     {rgb}{0.79,0.88,1.00}
\definecolor {lightsteelblue2}     {rgb}{0.74,0.82,0.93}
\definecolor {lightsteelblue3}     {rgb}{0.64,0.71,0.80}
\definecolor {lightsteelblue4}     {rgb}{0.43,0.48,0.55}
\definecolor {lightblue1}          {rgb}{0.75,0.94,1.00}
\definecolor {lightblue2}          {rgb}{0.70,0.87,0.93}
\definecolor {lightblue3}          {rgb}{0.60,0.75,0.80}
\definecolor {lightblue4}          {rgb}{0.41,0.51,0.55}
\definecolor {lightcyan2}          {rgb}{0.82,0.93,0.93}
\definecolor {lightcyan3}          {rgb}{0.71,0.80,0.80}
\definecolor {lightcyan4}          {rgb}{0.48,0.55,0.55}
\definecolor {paleturquoise1}      {rgb}{0.73,1.00,1.00}
\definecolor {paleturquoise2}      {rgb}{0.68,0.93,0.93}
\definecolor {paleturquoise3}      {rgb}{0.59,0.80,0.80}
\definecolor {paleturquoise4}      {rgb}{0.40,0.55,0.55}
\definecolor {cadetblue1}          {rgb}{0.60,0.96,1.00}
\definecolor {cadetblue2}          {rgb}{0.56,0.90,0.93}
\definecolor {cadetblue3}          {rgb}{0.48,0.77,0.80}
\definecolor {cadetblue4}          {rgb}{0.33,0.53,0.55}
\definecolor {turquoise1}          {rgb}{0.00,0.96,1.00}
\definecolor {turquoise2}          {rgb}{0.00,0.90,0.93}
\definecolor {turquoise3}          {rgb}{0.00,0.77,0.80}
\definecolor {turquoise4}          {rgb}{0.00,0.53,0.55}
\definecolor {cyan2}               {rgb}{0.00,0.93,0.93}
\definecolor {cyan3}               {rgb}{0.00,0.80,0.80}
\definecolor {cyan4}               {rgb}{0.00,0.55,0.55}
\definecolor {darkslategray1}      {rgb}{0.59,1.00,1.00}
\definecolor {darkslategray2}      {rgb}{0.55,0.93,0.93}
\definecolor {darkslategray3}      {rgb}{0.47,0.80,0.80}
\definecolor {darkslategray4}      {rgb}{0.32,0.55,0.55}
\definecolor {aquamarine2}         {rgb}{0.46,0.93,0.78}
\definecolor {aquamarine4}         {rgb}{0.27,0.55,0.45}
\definecolor {darkseagreen1}       {rgb}{0.76,1.00,0.76}
\definecolor {darkseagreen2}       {rgb}{0.71,0.93,0.71}
\definecolor {darkseagreen3}       {rgb}{0.61,0.80,0.61}
\definecolor {darkseagreen4}       {rgb}{0.41,0.55,0.41}
\definecolor {seagreen1}           {rgb}{0.33,1.00,0.62}
\definecolor {seagreen2}           {rgb}{0.31,0.93,0.58}
\definecolor {seagreen3}           {rgb}{0.26,0.80,0.50}
\definecolor {palegreen1}          {rgb}{0.60,1.00,0.60}
\definecolor {palegreen2}          {rgb}{0.56,0.93,0.56}
\definecolor {palegreen3}          {rgb}{0.49,0.80,0.49}
\definecolor {palegreen4}          {rgb}{0.33,0.55,0.33}
\definecolor {springgreen2}        {rgb}{0.00,0.93,0.46}
\definecolor {springgreen3}        {rgb}{0.00,0.80,0.40}
\definecolor {springgreen4}        {rgb}{0.00,0.55,0.27}
\definecolor {green2}              {rgb}{0.00,0.93,0.00}
\definecolor {green3}              {rgb}{0.00,0.80,0.00}
\definecolor {green4}              {rgb}{0.00,0.55,0.00}
\definecolor {chartreuse2}         {rgb}{0.46,0.93,0.00}
\definecolor {chartreuse3}         {rgb}{0.40,0.80,0.00}
\definecolor {chartreuse4}         {rgb}{0.27,0.55,0.00}
\definecolor {olivedrab1}          {rgb}{0.75,1.00,0.24}
\definecolor {olivedrab2}          {rgb}{0.70,0.93,0.23}
\definecolor {olivedrab4}          {rgb}{0.41,0.55,0.13}
\definecolor {darkolivegreen1}     {rgb}{0.79,1.00,0.44}
\definecolor {darkolivegreen2}     {rgb}{0.74,0.93,0.41}
\definecolor {darkolivegreen3}     {rgb}{0.64,0.80,0.35}
\definecolor {darkolivegreen4}     {rgb}{0.43,0.55,0.24}
\definecolor {khaki1}              {rgb}{1.00,0.96,0.56}
\definecolor {khaki2}              {rgb}{0.93,0.90,0.52}
\definecolor {khaki3}              {rgb}{0.80,0.78,0.45}
\definecolor {khaki4}              {rgb}{0.55,0.53,0.31}
\definecolor {lightgoldenrod1}     {rgb}{1.00,0.93,0.55}
\definecolor {lightgoldenrod2}     {rgb}{0.93,0.86,0.51}
\definecolor {lightgoldenrod3}     {rgb}{0.80,0.75,0.44}
\definecolor {lightgoldenrod4}     {rgb}{0.55,0.51,0.30}
\definecolor {lightyellow2}        {rgb}{0.93,0.93,0.82}
\definecolor {lightyellow3}        {rgb}{0.80,0.80,0.71}
\definecolor {lightyellow4}        {rgb}{0.55,0.55,0.48}
\definecolor {yellow2}             {rgb}{0.93,0.93,0.00}
\definecolor {yellow3}             {rgb}{0.80,0.80,0.00}
\definecolor {yellow4}             {rgb}{0.55,0.55,0.00}
\definecolor {gold2}               {rgb}{0.93,0.79,0.00}
\definecolor {gold3}               {rgb}{0.80,0.68,0.00}
\definecolor {gold4}               {rgb}{0.55,0.46,0.00}
\definecolor {goldenrod1}          {rgb}{1.00,0.76,0.15}
\definecolor {goldenrod2}          {rgb}{0.93,0.71,0.13}
\definecolor {goldenrod3}          {rgb}{0.80,0.61,0.11}
\definecolor {goldenrod4}          {rgb}{0.55,0.41,0.08}
\definecolor {darkgoldenrod1}      {rgb}{1.00,0.73,0.06}
\definecolor {darkgoldenrod2}      {rgb}{0.93,0.68,0.05}
\definecolor {darkgoldenrod3}      {rgb}{0.80,0.58,0.05}
\definecolor {darkgoldenrod4}      {rgb}{0.55,0.40,0.03}
\definecolor {rosybrown1}          {rgb}{1.00,0.76,0.76}
\definecolor {rosybrown2}          {rgb}{0.93,0.71,0.71}
\definecolor {rosybrown3}          {rgb}{0.80,0.61,0.61}
\definecolor {rosybrown4}          {rgb}{0.55,0.41,0.41}
\definecolor {indianred1}          {rgb}{1.00,0.42,0.42}
\definecolor {indianred2}          {rgb}{0.93,0.39,0.39}
\definecolor {indianred3}          {rgb}{0.80,0.33,0.33}
\definecolor {indianred4}          {rgb}{0.55,0.23,0.23}
\definecolor {sienna1}             {rgb}{1.00,0.51,0.28}
\definecolor {sienna2}             {rgb}{0.93,0.47,0.26}
\definecolor {sienna3}             {rgb}{0.80,0.41,0.22}
\definecolor {sienna4}             {rgb}{0.55,0.28,0.15}
\definecolor {burlywood1}          {rgb}{1.00,0.83,0.61}
\definecolor {burlywood2}          {rgb}{0.93,0.77,0.57}
\definecolor {burlywood3}          {rgb}{0.80,0.67,0.49}
\definecolor {burlywood4}          {rgb}{0.55,0.45,0.33}
\definecolor {wheat1}              {rgb}{1.00,0.91,0.73}
\definecolor {wheat2}              {rgb}{0.93,0.85,0.68}
\definecolor {wheat3}              {rgb}{0.80,0.73,0.59}
\definecolor {wheat4}              {rgb}{0.55,0.49,0.40}
\definecolor {tan1}                {rgb}{1.00,0.65,0.31}
\definecolor {tan2}                {rgb}{0.93,0.60,0.29}
\definecolor {tan4}                {rgb}{0.55,0.35,0.17}
\definecolor {chocolate1}          {rgb}{1.00,0.50,0.14}
\definecolor {chocolate2}          {rgb}{0.93,0.46,0.13}
\definecolor {chocolate3}          {rgb}{0.80,0.40,0.11}
\definecolor {firebrick1}          {rgb}{1.00,0.19,0.19}
\definecolor {firebrick2}          {rgb}{0.93,0.17,0.17}
\definecolor {firebrick3}          {rgb}{0.80,0.15,0.15}
\definecolor {firebrick4}          {rgb}{0.55,0.10,0.10}
\definecolor {brown1}              {rgb}{1.00,0.25,0.25}
\definecolor {brown2}              {rgb}{0.93,0.23,0.23}
\definecolor {brown3}              {rgb}{0.80,0.20,0.20}
\definecolor {brown4}              {rgb}{0.55,0.14,0.14}
\definecolor {salmon1}             {rgb}{1.00,0.55,0.41}
\definecolor {salmon2}             {rgb}{0.93,0.51,0.38}
\definecolor {salmon3}             {rgb}{0.80,0.44,0.33}
\definecolor {salmon4}             {rgb}{0.55,0.30,0.22}
\definecolor {lightsalmon2}        {rgb}{0.93,0.58,0.45}
\definecolor {lightsalmon3}        {rgb}{0.80,0.51,0.38}
\definecolor {lightsalmon4}        {rgb}{0.55,0.34,0.26}
\definecolor {orange2}             {rgb}{0.93,0.60,0.00}
\definecolor {orange3}             {rgb}{0.80,0.52,0.00}
\definecolor {orange4}             {rgb}{0.55,0.35,0.00}
\definecolor {darkorange1}         {rgb}{1.00,0.50,0.00}
\definecolor {darkorange2}         {rgb}{0.93,0.46,0.00}
\definecolor {darkorange3}         {rgb}{0.80,0.40,0.00}
\definecolor {darkorange4}         {rgb}{0.55,0.27,0.00}
\definecolor {coral1}              {rgb}{1.00,0.45,0.34}
\definecolor {coral2}              {rgb}{0.93,0.42,0.31}
\definecolor {coral3}              {rgb}{0.80,0.36,0.27}
\definecolor {coral4}              {rgb}{0.55,0.24,0.18}
\definecolor {tomato2}             {rgb}{0.93,0.36,0.26}
\definecolor {tomato3}             {rgb}{0.80,0.31,0.22}
\definecolor {tomato4}             {rgb}{0.55,0.21,0.15}
\definecolor {orangered2}          {rgb}{0.93,0.25,0.00}
\definecolor {orangered3}          {rgb}{0.80,0.22,0.00}
\definecolor {orangered4}          {rgb}{0.55,0.15,0.00}
\definecolor {red2}                {rgb}{0.93,0.00,0.00}
\definecolor {red3}                {rgb}{0.80,0.00,0.00}
\definecolor {red4}                {rgb}{0.55,0.00,0.00}
\definecolor {deeppink2}           {rgb}{0.93,0.07,0.54}
\definecolor {deeppink3}           {rgb}{0.80,0.06,0.46}
\definecolor {deeppink4}           {rgb}{0.55,0.04,0.31}
\definecolor {hotpink1}            {rgb}{1.00,0.43,0.71}
\definecolor {hotpink2}            {rgb}{0.93,0.42,0.65}
\definecolor {hotpink3}            {rgb}{0.80,0.38,0.56}
\definecolor {hotpink4}            {rgb}{0.55,0.23,0.38}
\definecolor {pink1}               {rgb}{1.00,0.71,0.77}
\definecolor {pink2}               {rgb}{0.93,0.66,0.72}
\definecolor {pink3}               {rgb}{0.80,0.57,0.62}
\definecolor {pink4}               {rgb}{0.55,0.39,0.42}
\definecolor {lightpink1}          {rgb}{1.00,0.68,0.73}
\definecolor {lightpink2}          {rgb}{0.93,0.64,0.68}
\definecolor {lightpink3}          {rgb}{0.80,0.55,0.58}
\definecolor {lightpink4}          {rgb}{0.55,0.37,0.40}
\definecolor {palevioletred1}      {rgb}{1.00,0.51,0.67}
\definecolor {palevioletred2}      {rgb}{0.93,0.47,0.62}
\definecolor {palevioletred3}      {rgb}{0.80,0.41,0.54}
\definecolor {palevioletred4}      {rgb}{0.55,0.28,0.36}
\definecolor {maroon1}             {rgb}{1.00,0.20,0.70}
\definecolor {maroon2}             {rgb}{0.93,0.19,0.65}
\definecolor {maroon3}             {rgb}{0.80,0.16,0.56}
\definecolor {maroon4}             {rgb}{0.55,0.11,0.38}
\definecolor {violetred1}          {rgb}{1.00,0.24,0.59}
\definecolor {violetred2}          {rgb}{0.93,0.23,0.55}
\definecolor {violetred3}          {rgb}{0.80,0.20,0.47}
\definecolor {violetred4}          {rgb}{0.55,0.13,0.32}
\definecolor {magenta2}            {rgb}{0.93,0.00,0.93}
\definecolor {magenta3}            {rgb}{0.80,0.00,0.80}
\definecolor {magenta4}            {rgb}{0.55,0.00,0.55}
\definecolor {orchid1}             {rgb}{1.00,0.51,0.98}
\definecolor {orchid2}             {rgb}{0.93,0.48,0.91}
\definecolor {orchid3}             {rgb}{0.80,0.41,0.79}
\definecolor {orchid4}             {rgb}{0.55,0.28,0.54}
\definecolor {plum1}               {rgb}{1.00,0.73,1.00}
\definecolor {plum2}               {rgb}{0.93,0.68,0.93}
\definecolor {plum3}               {rgb}{0.80,0.59,0.80}
\definecolor {plum4}               {rgb}{0.55,0.40,0.55}
\definecolor {mediumorchid1}       {rgb}{0.88,0.40,1.00}
\definecolor {mediumorchid2}       {rgb}{0.82,0.37,0.93}
\definecolor {mediumorchid3}       {rgb}{0.71,0.32,0.80}
\definecolor {mediumorchid4}       {rgb}{0.48,0.22,0.55}
\definecolor {darkorchid1}         {rgb}{0.75,0.24,1.00}
\definecolor {darkorchid2}         {rgb}{0.70,0.23,0.93}
\definecolor {darkorchid3}         {rgb}{0.60,0.20,0.80}
\definecolor {darkorchid4}         {rgb}{0.41,0.13,0.55}
\definecolor {purple1}             {rgb}{0.61,0.19,1.00}
\definecolor {purple2}             {rgb}{0.57,0.17,0.93}
\definecolor {purple3}             {rgb}{0.49,0.15,0.80}
\definecolor {purple4}             {rgb}{0.33,0.10,0.55}
\definecolor {mediumpurple1}       {rgb}{0.67,0.51,1.00}
\definecolor {mediumpurple2}       {rgb}{0.62,0.47,0.93}
\definecolor {mediumpurple3}       {rgb}{0.54,0.41,0.80}
\definecolor {mediumpurple4}       {rgb}{0.36,0.28,0.55}
\definecolor {thistle1}            {rgb}{1.00,0.88,1.00}
\definecolor {thistle2}            {rgb}{0.93,0.82,0.93}
\definecolor {thistle3}            {rgb}{0.80,0.71,0.80}
\definecolor {thistle4}            {rgb}{0.55,0.48,0.55}
\definecolor {gray1}               {rgb}{0.01,0.01,0.01}
\definecolor {gray2}               {rgb}{0.02,0.02,0.02}
\definecolor {gray3}               {rgb}{0.03,0.03,0.03}
\definecolor {gray4}               {rgb}{0.04,0.04,0.04}
\definecolor {gray5}               {rgb}{0.05,0.05,0.05}
\definecolor {gray6}               {rgb}{0.06,0.06,0.06}
\definecolor {gray7}               {rgb}{0.07,0.07,0.07}
\definecolor {gray8}               {rgb}{0.08,0.08,0.08}
\definecolor {gray9}               {rgb}{0.09,0.09,0.09}
\definecolor {gray10}              {rgb}{0.10,0.10,0.10}
\definecolor {gray11}              {rgb}{0.11,0.11,0.11}
\definecolor {gray12}              {rgb}{0.12,0.12,0.12}
\definecolor {gray13}              {rgb}{0.13,0.13,0.13}
\definecolor {gray14}              {rgb}{0.14,0.14,0.14}
\definecolor {gray15}              {rgb}{0.15,0.15,0.15}
\definecolor {gray16}              {rgb}{0.16,0.16,0.16}
\definecolor {gray17}              {rgb}{0.17,0.17,0.17}
\definecolor {gray18}              {rgb}{0.18,0.18,0.18}
\definecolor {gray19}              {rgb}{0.19,0.19,0.19}
\definecolor {gray20}              {rgb}{0.20,0.20,0.20}
\definecolor {gray21}              {rgb}{0.21,0.21,0.21}
\definecolor {gray22}              {rgb}{0.22,0.22,0.22}
\definecolor {gray23}              {rgb}{0.23,0.23,0.23}
\definecolor {gray24}              {rgb}{0.24,0.24,0.24}
\definecolor {gray25}              {rgb}{0.25,0.25,0.25}
\definecolor {gray26}              {rgb}{0.26,0.26,0.26}
\definecolor {gray27}              {rgb}{0.27,0.27,0.27}
\definecolor {gray28}              {rgb}{0.28,0.28,0.28}
\definecolor {gray29}              {rgb}{0.29,0.29,0.29}
\definecolor {gray30}              {rgb}{0.30,0.30,0.30}
\definecolor {gray31}              {rgb}{0.31,0.31,0.31}
\definecolor {gray32}              {rgb}{0.32,0.32,0.32}
\definecolor {gray33}              {rgb}{0.33,0.33,0.33}
\definecolor {gray34}              {rgb}{0.34,0.34,0.34}
\definecolor {gray35}              {rgb}{0.35,0.35,0.35}
\definecolor {gray36}              {rgb}{0.36,0.36,0.36}
\definecolor {gray37}              {rgb}{0.37,0.37,0.37}
\definecolor {gray38}              {rgb}{0.38,0.38,0.38}
\definecolor {gray39}              {rgb}{0.39,0.39,0.39}
\definecolor {gray40}              {rgb}{0.40,0.40,0.40}
\definecolor {gray42}              {rgb}{0.42,0.42,0.42}
\definecolor {gray43}              {rgb}{0.43,0.43,0.43}
\definecolor {gray44}              {rgb}{0.44,0.44,0.44}
\definecolor {gray45}              {rgb}{0.45,0.45,0.45}
\definecolor {gray46}              {rgb}{0.46,0.46,0.46}
\definecolor {gray47}              {rgb}{0.47,0.47,0.47}
\definecolor {gray48}              {rgb}{0.48,0.48,0.48}
\definecolor {gray49}              {rgb}{0.49,0.49,0.49}
\definecolor {gray50}              {rgb}{0.50,0.50,0.50}
\definecolor {gray51}              {rgb}{0.51,0.51,0.51}
\definecolor {gray52}              {rgb}{0.52,0.52,0.52}
\definecolor {gray53}              {rgb}{0.53,0.53,0.53}
\definecolor {gray54}              {rgb}{0.54,0.54,0.54}
\definecolor {gray55}              {rgb}{0.55,0.55,0.55}
\definecolor {gray56}              {rgb}{0.56,0.56,0.56}
\definecolor {gray57}              {rgb}{0.57,0.57,0.57}
\definecolor {gray58}              {rgb}{0.58,0.58,0.58}
\definecolor {gray59}              {rgb}{0.59,0.59,0.59}
\definecolor {gray60}              {rgb}{0.60,0.60,0.60}
\definecolor {gray61}              {rgb}{0.61,0.61,0.61}
\definecolor {gray62}              {rgb}{0.62,0.62,0.62}
\definecolor {gray63}              {rgb}{0.63,0.63,0.63}
\definecolor {gray64}              {rgb}{0.64,0.64,0.64}
\definecolor {gray65}              {rgb}{0.65,0.65,0.65}
\definecolor {gray66}              {rgb}{0.66,0.66,0.66}
\definecolor {gray67}              {rgb}{0.67,0.67,0.67}
\definecolor {gray68}              {rgb}{0.68,0.68,0.68}
\definecolor {gray69}              {rgb}{0.69,0.69,0.69}
\definecolor {gray70}              {rgb}{0.70,0.70,0.70}
\definecolor {gray71}              {rgb}{0.71,0.71,0.71}
\definecolor {gray72}              {rgb}{0.72,0.72,0.72}
\definecolor {gray73}              {rgb}{0.73,0.73,0.73}
\definecolor {gray74}              {rgb}{0.74,0.74,0.74}
\definecolor {gray75}              {rgb}{0.75,0.75,0.75}
\definecolor {gray76}              {rgb}{0.76,0.76,0.76}
\definecolor {gray77}              {rgb}{0.77,0.77,0.77}
\definecolor {gray78}              {rgb}{0.78,0.78,0.78}
\definecolor {gray79}              {rgb}{0.79,0.79,0.79}
\definecolor {gray80}              {rgb}{0.80,0.80,0.80}
\definecolor {gray81}              {rgb}{0.81,0.81,0.81}
\definecolor {gray82}              {rgb}{0.82,0.82,0.82}
\definecolor {gray83}              {rgb}{0.83,0.83,0.83}
\definecolor {gray84}              {rgb}{0.84,0.84,0.84}
\definecolor {gray85}              {rgb}{0.85,0.85,0.85}
\definecolor {gray86}              {rgb}{0.86,0.86,0.86}
\definecolor {gray87}              {rgb}{0.87,0.87,0.87}
\definecolor {gray88}              {rgb}{0.88,0.88,0.88}
\definecolor {gray89}              {rgb}{0.89,0.89,0.89}
\definecolor {gray90}              {rgb}{0.90,0.90,0.90}
\definecolor {gray91}              {rgb}{0.91,0.91,0.91}
\definecolor {gray92}              {rgb}{0.92,0.92,0.92}
\definecolor {gray93}              {rgb}{0.93,0.93,0.93}
\definecolor {gray94}              {rgb}{0.94,0.94,0.94}
\definecolor {gray95}              {rgb}{0.95,0.95,0.95}
\definecolor {gray97}              {rgb}{0.97,0.97,0.97}
\definecolor {gray98}              {rgb}{0.98,0.98,0.98}
\definecolor {gray99}              {rgb}{0.99,0.99,0.99}
\definecolor {darkgrey}            {rgb}{0.66,0.66,0.66}

\newcommand{\resp}[1]{[resp.\ #1]}

\newcommand{\TODO}[1]{{}}

\newcommand{\ignore}[1]{}

\newcommand{\RSTODO}[1]{{\bf \textcolor{darkgreen}{{\fbox{RS TODO:} #1}}}}
\renewcommand{\RSTODO}[1]{}

\newcommand{\ignoreinshort}[1]{}
\newcommand{\ignoreinlong}[1]{{#1}}

\providecommand{\longversion}{true}
\ifthenelse{\equal{\longversion}{true}}
{%
    \renewcommand{\ignoreinshort}[1]{\textcolor{blue}{#1}}
    \newcommand{\ignoreinshortnc}[1]{{#1}}
    \renewcommand{\ignoreinlong}[1]{}
    \newcommand{\ignoreinlongnc}[1]{}
    
}%
{%
    \renewcommand{\ignoreinshort}[1]{}
    \newcommand{\ignoreinshortnc}[1]{}
    \renewcommand{\ignoreinlong}[1]{\textcolor{blue}{#1}}
    \newcommand{\ignoreinlongnc}[1]{{#1}}
    
}

\def\makenewenumerate#1#2{%
    \newcounter{cnt#1}
    \newenvironment{#1}%
    {\begin{list}{\makebox[0pt][r]{#2}}%
            {\setlength{\itemsep}{0pt}%
                \setlength{\parsep}{.2em}%
                \setlength{\leftmargin}{1.5em}%
                \setlength{\labelwidth}{.4em}%
                \usecounter{cnt#1}}}
            {\end{list}}}

\makenewenumerate{myenumerate}{\arabic{cntmyenumerate}.}

\makenewenumerate{renumerate}{\rm(\roman{cntrenumerate})}
\makenewenumerate{renumerateprime}{\rm(\roman{cntrenumerateprime}$'$)}
\makenewenumerate{renumeratesecond}{\rm(\roman{cntrenumeratesecond}$''$)}

\makenewenumerate{aenumerate}{({\it\alph{cntaenumerate}})}
\makenewenumerate{aenumerateprime}{({\it\alph{cntaenumerateprime}$'$})}
\makenewenumerate{aenumeratesecond}{({\it\alph{cntaenumeratesecond}$''$})}

\newcommand{\sref}[1]{\S{}\ref{#1}}

\newcommand{\set}[1]{\ensuremath{\{{#1}\}}\xspace}

\renewcommand{\iff}{\ensuremath{\leftrightarrow}\xspace}
\newcommand{\defas}{\ensuremath{\stackrel{\text{\scalebox{.7}{def}}}{=}}\xspace}

\newcommand\mysout{\bgroup \markoverwith{{-}}\ULon}
\newcommand\nosout{\bgroup \markoverwith{{ }}\ULon}
\definecolor{mygray}{rgb}{0.90,0.90,0.90}
\definecolor{mywhite}{rgb}{1.00,1.00,1.00}

\newcommand{\minimize}{\ensuremath{\mathsf{Minimize}}\xspace}

\newcommand{\satres}{\textsc{sat}\xspace}

\newcommand{\proptofol}{\ensuremath{{\cal B}2{\cal T}}\xspace}
\newcommand{\foltoprop}{\ensuremath{{\cal T}2{\cal B}}\xspace}

\newcommand{\atoms}[1]{\ensuremath{Atoms(#1)}\xspace}

\newcommand{\B}{\ensuremath{\mathcal{B}}\xspace}
\newcommand{\T}{\ensuremath{\mathcal{T}}\xspace}

\newcommand{\smttt}[1]{\ensuremath{\text{SMT}(#1)}\xspace}

\newcommand{\euf}{\ensuremath{\mathcal{EUF}}\xspace}

\newcommand{\eq}{\ensuremath{\mathcal{E}}\xspace}
\newcommand{\dl}{\ensuremath{\mathcal{DL}}\xspace}

\newcommand{\larat}{\ensuremath{\mathcal{LA}(\mathbb{Q})}\xspace}

\renewcommand{\larat}{\ensuremath{\mathcal{LRA}}\xspace}

\newcommand{\bv}{\ensuremath{\mathcal{BV}}\xspace}
\newcommand{\mem}{\ensuremath{\mathcal{AR}}\xspace}

\newcommand{\smtlarat}{\smttt{\larat}}

\newcommand{\Tmodels}{\models_{\T}}

\newcommand{\pmodels}{\models_p}

\newcommand{\Tlemmas}{\T-lemmas\xspace}

\newcommand{\mathsat}{\textsc{MathSAT}\xspace}

\newcommand{\mathsatfive}{\textsc{MathSAT5}\xspace}

\renewcommand{\TODO}[1]{\todo[inline,color=green!40]{{\small{TODO: #1}}}}

\renewcommand{\RSTODO}[1]{\todo[inline,color=green!40]{{\small{RS TODO: #1}}}}

\newcommand{\allalpha}{\ensuremath{\boldsymbol{\alpha}}\xspace}
\newcommand{\allalphaprime}{\ensuremath{\allalpha^\prime}\xspace}
\newcommand{\allalphasecond}{\ensuremath{\allalpha^{\prime\prime}}\xspace}

\newcommand{\vi}{\ensuremath{\varphi}}
\newcommand{\viprime}{\ensuremath{\varphi'}}

\renewcommand{\B}{\ensuremath{\mathbb{B}}\xspace}

\newcommand{\allsym}[1]{\ensuremath{{\underline{\boldsymbol#1}}}}
\renewcommand{\allsym}[1]{\ensuremath{{\boldsymbol#1}}}

\renewcommand{\allalpha}{\allsym{\alpha}}

\newcommand{\etaa}[1]{\ensuremath{\eta_{#1}[\allalpha]}}
\newcommand{\rhoa}[1]{\ensuremath{\rho_{#1}[\allalpha]}}

\newcommand{\mua}[1]{\ensuremath{\mu_{#1}[\allalpha]}}
\newcommand{\Ca}[1]{\ensuremath{C_{#1}[\allalpha]}}

\newcommand{\via}[1]{\ensuremath{\varphi_{#1}[\allalpha]}}

\renewcommand{\viprime}[1]{\ensuremath{\varphi_{#1}'}}
\newcommand{\viaprime}[1]{\ensuremath{\viprime{#1}[\allalpha]}}

\newcommand{\etaaprime}{\ensuremath{\eta^\prime[\allalphaprime]}}
\newcommand{\muaprime}[1]{\ensuremath{\mu_{#1}'[\allalphaprime]}}
\newcommand{\etaasecond}{\ensuremath{\eta^{\prime\prime}[\allalphasecond]}}

\newcommand{\viprimea}[1]{\ensuremath{\varphi_{#1}'[\allalpha]}}

\newcommand{\rhoprimea}[1]{\ensuremath{\rho_{#1}'[\allalpha]}}

\newcommand{\bequiv}{\ensuremath{\equiv_{\B}}}
\newcommand{\Tequiv}{\ensuremath{\equiv_{\T}}}

\newcommand{\theorycla}{\ensuremath{\{\Ca{1},\ldots,\Ca{K}\}}}

\renewcommand{\pmodels}{\models_{\B}}

\newcommand{\aone}{\ensuremath{(x \le 0)}\xspace}
\newcommand{\atwo}{\ensuremath{(x = 1)}\xspace}

\newcommand{\CTTA}[1]{\ensuremath{H}(#1)} %
\newcommand{\ITTA}[1]{\ensuremath{P}(#1)} %
\renewcommand{\CTTA}[1]{\ensuremath{H_{\allalpha}(#1)}} %
\renewcommand{\ITTA}[1]{\ensuremath{P_{\allalpha}(#1)}} %

\newcommand{\TLEMMAS}[1]{\ensuremath{Cl_{\allalpha}(#1)}}

\newcommand{\allalphaT}{\ensuremath{\allalpha_{\T}}}
\newcommand{\allalphaB}{\ensuremath{\allalpha_{\B}}}
\newcommand{\allalphared}{\ensuremath{\allalpha_1}}

\newcommand{\allalphablu}{\ensuremath{\allalpha_2}}
\renewcommand{\allalphared}{\ensuremath{\textcolor{red}{\allalpha_1}}}
\renewcommand{\allalphablu}{\ensuremath{\textcolor{blue}{\allalpha_2}}}
\newcommand{\etaaT}[1]{\ensuremath{{\eta_\T}_{#1}[\allalphaT]}}
\newcommand{\etaaB}[1]{\ensuremath{{\eta_\B}_{#1}[\allalphaT]}}
\newcommand{\rhoaT}[1]{\ensuremath{{\rho_\T}_{#1}[\allalphaT]}}
\newcommand{\rhoaB}[1]{\ensuremath{{\rho_\B}_{#1}[\allalphaB]}}
\newcommand{\rhoared}[1]{\ensuremath{\textcolor{red}{\rho_1}_{#1}[\allalphared]}}
\newcommand{\rhoablu}[1]{\ensuremath{\textcolor{blue}{\rho_2}_{#1}[\allalphablu]}}
\newcommand{\rhoprimeared}[1]{\ensuremath{\textcolor{red}{\rho^\prime_1}_{#1}[\allalphared]}}
\newcommand{\rhoprimeablu}[1]{\ensuremath{\textcolor{blue}{\rho^\prime_2}_{#1}[\allalphablu]}}
\newcommand{\etaared}{\ensuremath{\textcolor{red}{\eta_1}[\allalphared]}}
\newcommand{\etaablu}{\ensuremath{\textcolor{blue}{\eta_2}[\allalphablu]}}
\newcommand{\BCAT}[1]{\ensuremath{\mathit{BC}_{\allalphaT}(#1)}}
\newcommand{\BCARED}[1]{\ensuremath{\mathit{BC}_{\allalphared}(#1)}}
\newcommand{\BCABLU}[1]{\ensuremath{\mathit{BC}_{\allalphablu}(#1)}}
\newcommand{\TLEMMASANY}{\ensuremath{Cl_{\allalpha}}}
\newcommand{\TLEMMASANYPRIME}{\ensuremath{\TLEMMASANY^\prime}}

\newcommand{\CTAPRIMEANY}{\ensuremath{M_{\allalphaprime}}}

\newcommand{\Input}{\textbf{input:}\xspace}
\newcommand{\Output}{\textbf{output:}\xspace}

\newcommand{\projectedenum}{\ensuremath{\mathsf{ProjectedAllSMT}}}
\newcommand{\projectedenumtotal}{\ensuremath{\projectedenum_\mathsf{total}}}
\newcommand{\projectedenumpartial}{\ensuremath{\projectedenum_\mathsf{partial}}}
\newcommand{\project}{\ensuremath{\mathsf{Project}}}
\newcommand{\modelsapartial}{\ensuremath{{M_\allalpha}_p}}
\newcommand{\modelsaprimepartial}{\ensuremath{{M_{\allalphaprime}}_p}}
\newcommand{\lemmaspartial}{\ensuremath{{\TLEMMASANY}_p}}

\newcommand{\lemmastotal}{\ensuremath{{\TLEMMASANY}_t}}
\newcommand{\getTheoryAtoms}{\ensuremath{\mathsf{GetTheoryAtoms}}}
\newcommand{\enumerator}{\ensuremath{\mathsf{LemmaEnumerator}}}
\newcommand{\partitionAtoms}{\ensuremath{\mathsf{PartitionAtoms}}}
\newcommand{\SMTsolve}{\ensuremath{\mathsf{SMT.Solve}}}
\newcommand{\SMTgetAssignment}{\ensuremath{\mathsf{SMT.GetAssignment}}}
\newcommand{\SMTgetLemmas}{\ensuremath{\mathsf{SMT.GetLemmas}}}
\newcommand{\lemmaenumallsmt}{\ensuremath{\mathsf{LemmaEnumeratorAllSMT}}} \newcommand{\lemmaenumdc}{\ensuremath{\mathsf{LemmaEnumeratorD\&C}}}
\newcommand{\lemmaenumproj}{\ensuremath{\mathsf{LemmaEnumeratorWithProjection}}}
\newcommand{\lemmaenumpartitioning}{\ensuremath{\mathsf{LemmaEnumeratorWithPartitioning}}}

\newcommand{\expbaselinename}{{\textsf{Baseline}}}
\newcommand{\expdcname}{{\textsf{D\&C}}}
\newcommand{\expdcprojname}{{\expdcname+\textsf{Proj.}}}
\newcommand{\expdcprojpartname}{{\expdcprojname+\textsf{Part.}}}
\newcommand{\expbaseline}{\expbaselinename\xspace}
\newcommand{\expdc}{\expdcname\xspace}
\newcommand{\expdcproj}{\expdcprojname\xspace}
\newcommand{\expdcprojpart}{\expdcprojpartname\xspace}

\Crefname{property}{Property}{Properties}
\Crefname{property}{Property}{Properties}
\Crefname{appendix}{Appendix}{Appendices}
\Crefname{appendix}{Appendix}{Appendices}

\usepackage{color}

\begin{document}

\title{%
	Beyond Eager Encodings: A Theory-Agnostic Approach to Theory-Lemma Enumeration in SMT
}
\titlerunning{Beyond Eager Encodings: A \T-Agnostic Approach to \T-Lemma Enumeration}
\author{%
	Emanuele Civini\inst{1} \and
	Gabriele Masina\inst{1} \and
	Giuseppe Spallitta\inst{2} \and\\
	Roberto Sebastiani\inst{1}
}
\authorrunning{E. Civini et al.}
\institute{DISI, University of Trento, Trento, Italy
	\and Rice University, Houston, TX, USA
}
\maketitle              %

\renewcommand{\longversion}{true}
\ifthenelse{\equal{\longversion}{true}}
{%
	\renewcommand{\ignoreinshort}[1]{{\textcolor{midnightblue}{#1}}}
	\renewcommand{\ignoreinlong}[1]{}
	\specialcomment{IGNOREINSHORT}{\begingroup\color{midnightblue}}{\endgroup}
	\excludecomment{IGNOREINLONG}
}%
{%
	\renewcommand{\ignoreinshort}[1]{}
	\renewcommand{\ignoreinlong}[1]{#1}
	\excludecomment{IGNOREINSHORT}
	\specialcomment{IGNOREINLONG}{\begingroup}{\endgroup}
}
\excludecomment{IGNORE}

\begin{abstract}
	Lifting Boolean-reasoning techniques to the SMT level most often requires producing theory lemmas
that rule out theory-inconsistent truth assignments.
With standard SMT solving, it is common to ``lazily''
generate such lemmas on demand
during the search;
with some harder SMT-level tasks
---such as unsat-core extraction, MaxSMT, %
\T-OBDD or \T-SDD compilation---
it may be beneficial or even necessary to ``eagerly'' pre-compute all the needed theory
lemmas upfront. 
Whereas in principle ``classic'' eager SMT encodings could do the job,
they
are specific for very few and easy theories, %
they do not comply
with theory combination, and may produce lots of unnecessary lemmas. 

In this paper, we present theory-agnostic
methods for enumerating complete sets of
theory lemmas tailored to a given formula. 
Starting from AllSMT as a baseline approach, 
we propose improved lemma-enumeration techniques,
 including divide\&conquer,
 projected enumeration, and theory-driven partitioning,
which are highly parallelizable and which 
may drastically improve scalability. 
An experimental evaluation demonstrates that these techniques significantly enhance efficiency and enable the method to scale to substantially more complex instances.

	\keywords{SMT \and T-lemma enumeration  \and Knowledge Compilation.}
\end{abstract}
\section{Introduction}
\paragraph{Context.}
Lifting Boolean-reasoning techniques to the SMT level most often
requires producing \T-valid clauses (theory lemmas or \T-lemmas)
that rule out truth assignments that propositionally satisfy the
formula but are \T-inconsistent.
     (E.g., given %
     $\vi\defas(x=0)\vee(x=1)$, the \T-lemma $\neg(x=0)\vee\neg(x=1)$ rules out the \T-inconsistent truth assignment $(x=0)\wedge(x=1)$.)
These lemmas prevent the Boolean engine
from (re)visiting \T-inconsistent regions of the search space.

With standard SMT solving, \T-lemmas are commonly generated %
``lazily''
on demand
during the search~\cite{barrettSatisfiabilityModuloTheories2021},
each time to rule out some \T-inconsistent truth assignment, generated by a SAT solver, propositionally satisfying the input formula.

With some harder SMT tasks, such as (minimal) unsat-core extraction~\cite{cimattiComputingSmallUnsatisfiable2011,GuthmannST16},
MaxSMT~\cite{cimattiModularApproachMaxSAT2013,FazekasBB18}, %
\T-OBDD or \T-SDD compilation~\cite{micheluttiCanonicalDecisionDiagrams2024}, it may be beneficial or even necessary to ``eagerly'' pre-compute upfront {\em a whole set of \T-lemmas} ruling out all the \T-inconsistent truth assignments %
propositionally  satisfying the formula.

Unfortunately, with the exception of our own %
work in~\cite{micheluttiCanonicalDecisionDiagrams2024} (see below), in the literature, there seems to be no theory-agnostic way to achieve
this task. 
Notice that,
whereas in principle ``classic'' eager SMT encodings
(e.g.\ \cite{VelevB03,StrichmanSB02,Strichman02}, see below) could do the job,
they
are specific for very few and easy theories (\euf,\dl,\larat),
they do not comply
with theory combination, and may produce lots of unnecessary lemmas.~%

\paragraph{Contributions.}
In this paper, we address this problem explicitly, and we present
efficient and theory-agnostic methods for enumerating complete sets of
\T-lemmas ruling out all \T-inconsistent truth assignments
propositionally satisfying a formula.
Starting from our AllSMT-based
technique in~\cite{micheluttiCanonicalDecisionDiagrams2024} as a baseline,  we introduce three orthogonal improvements that drastically enhance
scalability.
First, we show how to use partial-assignment enumeration to decompose
the Boolean search space into independent residual subproblems,
enabling a divide-and-conquer strategy that can be parallelized.
Second, we prove that projecting the enumeration on the \T-atoms 
still yields a complete set of \T-lemmas, reducing the dimension of
the search. Third, we show that when the \T-atoms of $\vi$ can be partitioned
into symbol-disjoint components, \T-lemmas can be enumerated independently
within each component. These improvements are
theory agnostic and can be combined into a single framework.

An experimental evaluation on all the problems in~\cite{micheluttiCanonicalDecisionDiagrams2024}  and on 
temporal planning
problems~\cite{valentiniTemporalPlanningIntermediate2020}
demonstrates that these techniques drastically enhance efficiency, enabling the scaling to substantially more complex instances.

\paragraph{Related work.}
With SMT via eager encodings
(e.g.\ \cite{VelevB03,StrichmanSB02,Strichman02}), an input
\T-formula $\vi$ is encoded into SAT by
\begin{enumerate*}[label=(\roman*)]
\item enumerating a
complete set of \T-lemmas ruling out all possible
\T-unsatisfiable truth assignment on $\atoms{\vi}$,
\item conjoining them to
$\vi$ and 
\item producing the Boolean abstraction of the result. 
\end{enumerate*}
For SMT solving, these techniques ultimately proved to be much less
efficient than the lazy approach~\cite{barrettSatisfiabilityModuloTheories2021}. 
\footnote{In the SMT literature, the expression ``eager
encoding'' is also used to denote another and quite successful approach to SMT solving
for finite-domain theories, like bit-vectors,
floating-point arithmetic
     or arithmetic over bounded integers,
which encodes domain variables via sequences of Boolean
atoms, and encodes functions and predicates into Boolean functions
accordingly.
We emphasize that this is {\em not} what we refer to here.
}

With the ``lemma-lifting'' approaches for (minimal) unsat-core extraction~\cite{cimattiComputingSmallUnsatisfiable2011,GuthmannST16} and for
MaxSMT~\cite{cimattiModularApproachMaxSAT2013,FazekasBB18} respectively,
an SMT solver is invoked on \T-inconsistent formulas to
produce the \T-lemmas necessary for detecting their
\T-inconsistency; these \T-lemmas are then conjoined to the formulas,
whose Boolean abstractions are then fed to a Boolean unsat-core
extractors or to a MaxSAT solver, respectively. 

Most relevantly, in \cite{micheluttiCanonicalDecisionDiagrams2024} we
have presented a novel theory-agnostic technique to leverage to SMT
level canonical Decision
Diagrams (DD), e.g.\ OBDDs~\cite{bryantGraphBasedAlgorithmsBoolean1986} and SDDs \cite{darwicheSDDNewCanonical2011}.
Importantly, unlike previous approaches, these decision diagrams are provably {\em \T-canonical}, i.e., \T-equivalent formulas are encoded
into the same DD.
The technique works by enumerating upfront a whole collection of
\T-lemmas ruling out all the \T-inconsistent truth assignments
propositionally satisfying the formula, conjoining them to the formula
and producing the decision diagrams out of it.
Unfortunately, the \T-lemma enumeration technique adopted there, which was based on total-assignment AllSMT,  turned out to be the main bottleneck of
the whole process.

\paragraph{Motivations and Goals.}
In Knowledge Compilation (KC) (see \cite{darwicheKnowledgeCompilationMap2002}), a propositional theory is compiled off-line into a target
language, which is then used on-line to answer a large number of queries in polytime. The key
motivation behind KC is to push as much of the computational overhead into
the offline compiling phase, which is amortized over all online queries.

The main goal of this work is to play as an enabling technique for our approach to lift current KC methods to the SMT setting. A first step %
was presented in~\cite{micheluttiCanonicalDecisionDiagrams2024} for OBDDs and SDDs, and we recently generalized it to d-DNNFs in~\cite{masina-sat26-ddnnf},
in such a way to preserve the polynomial cost of the queries from the Boolean
case
(see~\cite{masina-sat26-ddnnf}).
The need for an eager \T-lemma enumeration upfront is motivated in~\cite{micheluttiCanonicalDecisionDiagrams2024,masina-sat26-ddnnf}.

\paragraph{Content of the paper.}
In~\sref{sec:background} we introduce the necessary background on SMT and AllSMT enumeration, which is the basis of our methods. In~\sref{sec:enum-lemmas} we first recall the baseline method for \T-lemma enumeration from~\cite{micheluttiCanonicalDecisionDiagrams2024} (\sref{sec:baseline}), and then present three novel theory-agnostic techniques for enumerating \T-lemmas, including divide-and-conquer (\sref{sec:divide-and-conquer}), projected enumeration (\sref{sec:projection}), and theory-driven partitioning (\sref{sec:partitioning}).
In~\sref{sec:experiments} we experimentally evaluate our techniques on a broad set of SMT problems, showing a dramatic improvement in efficiency and scalability wrt.\ the baseline.
Finally, in~\sref{sec:conclusions}, we summarize our contributions and
discuss future research directions.

\section{Background}%
\label{sec:background}
\paragraph{Notation \& terminology.}
We assume the reader is familiar with the basic syntax, semantics, and results of propositional and first-order logics.
We adopt the following terminology and notation.

A propositional formula is either a Boolean constant ($\top$ or $\bot$ representing ``true'' and ``false'', respectively), a Boolean atom, or a formula built from propositional connectives ($\neg,\land,\lor,\to,\leftrightarrow$) over propositional formulas. Propositional satisfiability (SAT) is the problem of deciding the satisfiability of propositional formulas.
Satisfiability Modulo Theories (SMT) extends SAT to the context of first-order formulas modulo some background theory \T, which provides an intended interpretation for constant, function, and predicate symbols~\cite{barrettSatisfiabilityModuloTheories2021}. We restrict to quantifier-free formulas. 
A \T-formula is a combination of theory-specific atoms (\T-atoms) and
Boolean atoms (\B-atoms) via Boolean connectives
(``atoms'' denotes \T- and \B-atoms indifferently).
For instance, the theory of linear real arithmetic (\larat) provides
the standard interpretations of arithmetic operators ($+$, $-$,
$\cdot$) and relations ($=$, $\le$, $<$, $>$) over the reals. %
\larat-atoms are linear (in)equalities over
rational variables. 
An example of \larat-formula is $((x-y\le 3)\vee (x=z))$, where $x,y,z$ are \larat-variables, and $(x-y\le 3),(x=z)$ are \larat-atoms.
Other theories of interest include equalities (\eq), equalities with uninterpreted functions (\euf), bit-vectors (\bv), arrays (\mem), and combinations thereof.

\foltoprop{} is a bijective function (``theory to Boolean''),
called {\em Boolean 
abstraction},
which maps Boolean atoms into themselves,
 \T{}-atoms into fresh Boolean atoms, 
and is homomorphic wrt.\ Boolean connectives and set inclusion.
The function \proptofol (``Boolean to theory''), called {\em
  refinement},  is the inverse of \foltoprop. 
  (For instance
$\foltoprop(\{((x-y\le 3)\vee (x=z)) \}) =
\{(A_1\vee A_2) \}$, $A_1$ and $A_2$ being fresh
Boolean variables, and $\proptofol(\{\neg A_1, A_2\})=
\{\neg (x-y\le 3), (x=z)\}$.)%

The symbol
$\allalpha\defas\set{\alpha_i}_i$,
possibly with subscripts or superscripts, denotes
a set of atoms.
We represent truth assignments as conjunctions of literals.
We
denote by
$2^{\allalpha}$ the set of all total truth assignments on \allalpha.
The symbols
$\vi$, $\psi$ denote \T{}-formulas, and 
$\mu$, $\eta$, $\rho$ denote conjunctions of %
literals.
If $\foltoprop(\eta) \models \foltoprop(\vi)$, then we say that $\eta$
\emph{propositionally (or \B-)satisfies} $\vi$, written
$\eta\pmodels\vi$. 
(Notice that if $\eta\pmodels\vi$ then
$\eta\Tmodels\vi$, but not vice versa.)
The notion of propositional/\B- satisfiability,
entailment and validity follow straightforwardly.
When both $\vi\pmodels\psi$ and $\psi\pmodels\vi$, we say that
$\vi$ and $\psi$ are {\em propositionally/\B- equivalent}, written 
``$\vi\bequiv\psi$''.
When both $\vi\Tmodels\psi$ and $\psi\Tmodels\vi$, we say that
$\vi$ and $\psi$ are {\em \T-equivalent}, written 
``$\vi\Tequiv\psi$''.
 (Notice that if $\eta\bequiv\vi$ then
 $\eta\Tequiv\vi$, but not vice versa.)
We call a {\em \T-lemma} any \T-valid clause.

\paragraph{Basic definitions.}
We recall some definitions %
from~\cite{micheluttiCanonicalDecisionDiagrams2024}.

We denote by \via{} the fact that  \allalpha{} is a superset of the
set of atoms occurring in $\vi$ whose truth assignments we are interested
in. The fact that it is a superset is sometimes necessary for
comparing formulas with different sets of atoms:
$\via{}$ and $\viprime{}[\allalpha{}']$ can be compared only if they are both
considered as formulas on $\allalpha{}\cup\allalpha{}'$.
(E.g., in order to check that the propositional formulas $(A_1\vee A_2)\wedge (A_1\vee\neg A_2)$ and
$(A_1\vee A_3)\wedge (A_1\vee\neg A_3)$ are equivalent, we need
considering them as formulas %
on \set{A_1,A_2,A_3}.)

Given a set of atoms \allalpha{} and a \T-formula \via{}, we denote by
$\CTTA{\vi{}}\defas\set{\etaa{i}}_{i}$ and 
$\ITTA{\vi{}}\defas\set{\rhoa{j}}_{j}$ respectively~%
\footnote{%
versions of the Greek letters 
$\eta$ and $\rho$, respectively.}
the set of all
{\em \T-consistent} and that of all {\em \T-inconsistent} total truth
assignments on the set of atoms $\allalpha{}$ which
\B-satisfy $\vi$, i.e., s.t.
\begin{eqnarray}
  \label{eq:decomposition}
  \via{} \bequiv \bigvee_{\etaa{i}\in\CTTA{\vi}}\etaa{i} \vee \bigvee_{\rhoa{j}\in\ITTA{\vi}}\rhoa{j}.
\end{eqnarray}
  Given two \T-formulas \via{} and \viaprime{}, we have that:
  \begin{enumerate}[label=(\alph*)]%
  \label{prop:assignmentsets}
  \item $\CTTA{\vi{}}$, $\ITTA{\vi{}}$, $\CTTA{\neg\vi{}}$,
    $\ITTA{\neg\vi{}}$ are pairwise disjoint;
  \item $\CTTA{\vi{}} \cup \ITTA{\vi{}} \cup \CTTA{\neg\vi{}} \cup \ITTA{\neg\vi{}}=2^{\allalpha}$; 
  \item $\via{}\bequiv\viaprime{}$ if and only if
    $\CTTA{\vi{}}=\CTTA{\viprime{}}$ and $\ITTA{\vi{}}=\ITTA{\viprime{}}$;
  \item $\via{}\Tequiv\viaprime{}$ if and only if $\CTTA{\vi{}}=\CTTA{\viprime{}}$.
  \end{enumerate}

\begin{example}%
\label{ex:proposition1}
Let $\allalpha{}=\set{\alpha_1,\alpha_2}\defas\set{\aone,\atwo}$.
  Consider %
  $\via{1}\defas \aone \vee \atwo$ and
  $\via{2}\defas \neg\aone \iff \atwo$.  
We have that
$\via{1}\not\bequiv\via{2}$ and 
$\via{1}\Tequiv\via{2}$.
Then $\CTTA{\via{1}}=\CTTA{\via{2}}=\set{\eta_1,\eta_2}=
\set{\aone\wedge\neg\atwo,\neg\aone\wedge\atwo}$, 
$\ITTA{\via{1}}=\set{\aone\wedge\atwo}$ and
$\ITTA{\via{2}}=\emptyset$.

\end{example}

\begin{definition}%
  \label{def:ruleout-nobetas}
  We say that a set \theorycla{} of clauses %
  \textbf{rules out} a given set
  $\set{\rhoa{1},\ldots,\rhoa{M}}$ of 
  total truth assignments 
  if and only if,
  for every $\rhoa{j}$ in the set, there exists a $\Ca{l}$
  s.t.\ $\rhoa{j}\pmodels \neg \Ca{l}$, that is, if and only if
  \begin{displaymath}
    \bigvee_{j=1}^M\rhoa{j}\wedge\bigwedge_{l=1}^K\Ca{l}\bequiv\bot.
  \end{displaymath}
\end{definition}

Given \allalpha{} and some \T-formula \via{}, we denote
as $\TLEMMAS{\vi}$ any function which returns a set 
of \T-lemmas \theorycla{} which rules out \ITTA{\vi}.

\paragraph*{AllSMT and Projected AllSMT.}%
\label{sec:all-smt}
Given a \T-formula \via{}, AllSMT is the problem of enumerating a set of \T-satisfiable truth assignments covering all $\etaa{}\in\CTTA{\vi{}}$~\cite{lahiriSMTTechniquesFast2006}.
We call \emph{total AllSMT} the task of enumerating all \T-satisfiable \emph{total} truth assignments $\etaa{}\in\CTTA{\vi{}}$. We call \emph{partial AllSMT} the task of enumerating a set of \T-satisfiable \emph{partial} truth assignments $\mua{}$ such that
$\mua{}\pmodels\vi{}$ and every $\etaa{}\in\CTTA{\vi{}}$ is an extension of some $\mua{}$.
\begin{algorithm}[t]\caption{\projectedenum(\via{}, \allalphaprime{})}\label{alg:proj-all-smt}
\Input $\via{}$: a \T-formula, $\allalphaprime{}$: set of atoms to project on, s.t.\ $\allalphaprime\subseteq\allalpha{}$\\
\Output $\CTAPRIMEANY$: set of partial assignments $\muaprime{}$ for projected partial AllSMT\\
\phantom{\Output}$\TLEMMASANY$: set of \T-lemmas learned during the enumeration
\begin{algorithmic}[1]
\State $\CTAPRIMEANY{}\gets\emptyset$\label{line:emptyset}
\While {$\SMTsolve(\via{})=\satres$}\label{line:while}
\State $\etaa{} \gets \SMTgetAssignment()$ \Comment{find a \T\!\!\!-sat truth assignment on $\allalpha{}$ for \vi{}}\label{line:getassignment}
\State $\mua{}\gets\minimize(\etaa{}, \allalphaprime{}, \via{})$\Comment{minimize $\etaa{}$ wrt.\ \allalphaprime{}}\label{line:minimize}
\State $\muaprime{}\gets \project(\mua{}, \allalphaprime{})$ \Comment{keep literals on \allalphaprime{} only}\label{line:project}
\State $\CTAPRIMEANY{} \gets \CTAPRIMEANY{} \cup \set{\muaprime{}}$\label{line:add}
\State $\vi{} \gets \vi{} \wedge \neg \muaprime{}$\Comment{block $\muaprime{}$}\label{line:block}
\EndWhile\label{line:endwhile}
\State $\TLEMMASANY \gets \SMTgetLemmas()$\Comment{get the learned \T-lemmas}\label{line:getlemmas}
\State \Return $\CTAPRIMEANY{}, \TLEMMASANY$
\end{algorithmic}
\end{algorithm}

Sometimes, we are interested in enumerating only the truth assignments
on a subset $\allalphaprime\subseteq\allalpha$ of ``relevant'' atoms, which we call \emph{projection atoms}.
Let $\allalphasecond\defas\allalpha\setminus\allalphaprime$.
Following~\cite{lahiriSMTTechniquesFast2006}, we call \emph{projected total} \resp{\emph{partial}} \emph{AllSMT} the task of
 enumerating a set of \T-satisfiable \emph{total} \resp{\emph{partial}} truth assignments on $\allalphaprime{}$, $\muaprime{}$, such that:
\begin{enumerate}[label=(\roman*)] %
  \item for every \muaprime{} there exists a total truth assignment
$\etaasecond{}$ on $\allalphasecond$ s.t.\ $\mua{}\defas
\muaprime{}\wedge\etaasecond{}$ is such that $\mua{}\pmodels\vi{}$ and $\mua{}$ is \T-satisfiable;\label{item:projected-partial-1}
\item every $\etaa{} \in \CTTA{\vi}$ is an extension of some
    $\muaprime{}$.\label{item:projected-partial-2}
\end{enumerate}
Note that if $\allalphaprime=\allalpha$, %
the projected versions coincide with the non-projected ones.

\Cref{alg:proj-all-smt} describes a procedure for computing projected partial AllSMT, implemented, e.g., in \mathsat~\cite{mathsat5_tacas13}; the procedure can be easily adapted to compute the other versions of AllSMT, as we detail in the following.
In particular, we denote by \projectedenumtotal{} the version of \Cref{alg:proj-all-smt} for projected total AllSMT, and by \projectedenumpartial{} the version for projected partial AllSMT.

The algorithm is based on the blocking-clause approach, which iteratively checks for the satisfiability of \via{} conjoined with the negation of the previously found truth assignments, until it becomes unsatisfiable (lines~\ref{line:while}-\ref{line:endwhile}).

In each iteration, it first finds a \T-satisfiable total truth assignment $\etaa{}$ on all the atoms $\allalpha$, which propositionally satisfies \via{} (line~\ref{line:getassignment}). This guarantees that all its sub-assignments are also \T-satisfiable. 

The assignment is then minimized, by removing literals over
$\allalphaprime$ from $\etaa{}\defas\etaaprime{}\wedge\etaasecond$, to find a partial assignment
$\mua{}\defas
\muaprime{}\wedge\etaasecond{}\subseteq\etaa{}$ which still satisfies \via{} (line~\ref{line:minimize}).
For total (projected) AllSMT, this step is omitted and $\mua{}=\etaa{}$.
Then, the assignment is projected on $\allalphaprime$ (line~\ref{line:project}), obtaining $\muaprime{}$ satisfying \Cref{item:projected-partial-1}. 
If $\allalphaprime=\allalpha$, then $\muaprime{}=\mua{}$, and
projected AllSMT reduces to AllSMT.

The projected assignment $\muaprime{}$ is then added to the set
$\CTAPRIMEANY$ (line~\ref{line:add}), and it is blocked by conjoining
the \emph{blocking clause} $\neg\muaprime{}$ to \via{}
(line~\ref{line:block}).%
~%
(In actual implementations, $\neg\muaprime{}$ may be fed to the
conflict-analysis mechanism of the SAT solver, and the output
conflict clause be used as blocking  clause
\cite{lahiriSMTTechniquesFast2006}; in \Cref{alg:proj-all-smt} we omit
this step to simplify the narration.)
The loop terminates when \via{} becomes unsatisfiable, which guarantees \Cref{item:projected-partial-2}.

Finally, we return both the set of enumerated assignments $\CTAPRIMEANY$, and also the set of \T-lemmas $\TLEMMASANY$ (line~\ref{line:getlemmas}) learned by the SMT solver during the enumeration, i.e., during all the calls to $\SMTsolve{}$ (line~\ref{line:while}), since we will use this algorithm as a subroutine for enumerating \T-lemmas in~\sref{sec:enum-lemmas}.

We remark that $\T$-lemmas are typically much
smaller than the blocking clauses negating partial or total assignments,
as the former negate conflicting cores of $\T$-atoms;
also, a \T-satisfiable partial assignment only implies that {\em
one} of its total extension is \T-satisfiable, whereas a
\T-unsatisfiable one implies that {\em all} its total extensions
are \T-unsatisfiable.
Thus, whereas a \T-lemma usually rules out a large amount of
elements of \ITTA{\vi}, a blocking clause rules out only one or
relatively few elements of \CTTA{\vi}.
Thus, unless \CTTA{\vi} is drastically smaller than \ITTA{\vi}, typically in AllSMT most of the search effort is devoted to %
rule out the assignments in \CTTA{\vi}.

We also remark that \emph{partial} AllSMT may be dramatically faster than total AllSMT, since a single \T-satisfiable partial assignment represents up to $2^k$ total ones, where $k$ is the number of unassigned atoms.
For the same reason, \emph{projected} AllSMT may be dramatically faster than non-projected AllSMT, as a single \T-satisfiable assignment on $\allalphaprime$ represents up to $2^k$ \T-satisfiable assignments on $\allalpha$.

As a final note, we also remark that the results presented in the rest
of the paper also hold for projected AllSMT algorithms not based on blocking clauses, e.g.,~\cite{spallittaDisjointProjectedEnumeration2025}. For simplicity, we carry out the discussion taking \Cref{alg:proj-all-smt} as a reference.

\section{\T-Lemma Enumeration Techniques}%
\label{sec:enum-lemmas}
We present four theory-agnostic methods for enumerating a set \TLEMMAS{\vi} of \T-lemmas to rule out \ITTA{\vi} (see \Cref{def:ruleout-nobetas}). 
In~\sref{sec:baseline} we describe the method used in~\cite{micheluttiCanonicalDecisionDiagrams2024}, which we will use as a baseline.
The method is simple to implement, but relies on basic, total AllSMT enumeration, which is often intractable on real-world instances due to the combinatorial explosion of total assignments.
To address this issue, we introduce three orthogonal improvements that
drastically enhance scalability.
First, in~\sref{sec:divide-and-conquer} we describe a divide-and-conquer strategy that can be naturally parallelized. Second, in~\sref{sec:projection} we prove that projecting the enumeration onto the \T-atoms guarantees finding a complete set of \T-lemmas, thereby enabling more efficient enumeration.
Finally, in~\sref{sec:partitioning} we prove that when the atoms of \vi{} can be partitioned into disjoint sets, \T-lemmas can be enumerated for each partition separately. This results in a dramatic improvement in the efficiency of enumeration.
Notably, these three improvements are orthogonal and can be combined.
For the sake of compactness, all the
  proofs are %
deferred to \Cref{app:proofs}.%

\subsection{Enumerating \T-lemmas via AllSMT}%
\label{sec:baseline}
\begin{algorithm}[t]
\caption{$\lemmaenumallsmt(\via{}, \allalphaprime)$}\label{alg:lemmas:allsmt} 
\Input $\via{}$: a \T-formula, $\allalphaprime$: a set of atoms to project on (default: $\allalpha$) \\
\Output $\TLEMMASANY{}$: a set of \T-lemmas
\begin{algorithmic}[1]
\State $\_, \TLEMMASANY{} \gets \projectedenumtotal(\via{}, \allalphaprime)$\Comment{Keep only lemmas}
\State \Return $\TLEMMASANY{}$
\end{algorithmic}
\end{algorithm}
The baseline approach for $\mathcal{T}$-lemma enumeration, first presented in~\cite{micheluttiCanonicalDecisionDiagrams2024}, is outlined in \Cref{alg:lemmas:allsmt} and consists of performing a call to \emph{total, non-projected AllSMT} on $\via{}$, using the procedure in \Cref{alg:proj-all-smt} with $\allalphaprime\defas\allalpha$ and no minimization.
(In \Cref{alg:lemmas:allsmt} we also consider the case with $\allalphaprime\subset\allalpha$
  because it will be used in \sref{sec:projection} and \sref{sec:partitioning}.)

The \T-lemmas learned by the SMT solver during such enumeration %
rule out all \T-inconsistent truth assignments in $\ITTA{\vi}$.
In fact, the algorithm terminates when
$\vi\wedge\bigwedge_{\eta_i\in\CTTA{\vi}} {\neg\eta_i}$ is found to be
\T-unsatisfiable, that is, when the solver has found a set of \T-lemmas \TLEMMASANY{} s.t.\
$\vi\wedge\bigwedge_{\eta_i\in\CTTA{\vi}} {\neg\eta_i}\wedge\bigwedge_{C\in\TLEMMASANY{}}{C}\bequiv\bot$.
By construction, this guarantees that $\TLEMMASANY{}$ rules out all
\T-unsatisfiable truth assignments in $\ITTA{\vi}$~\cite{micheluttiCanonicalDecisionDiagrams2024}.

It is important to note that we must enumerate \emph{total} truth assignments. 
Indeed, enumerating \emph{partial} truth assignments may produce a set
of $\mathcal{T}$-lemmas that does not rule out $\ITTA{\vi}$ completely. 
This is because the definition of partial AllSMT (see~\sref{sec:all-smt}) only requires that, for each partial assignment, all its total extensions \emph{propositionally} satisfy $\vi$, but only one extension needs to be \T-consistent.
The following example illustrates why this can lead to missing \T-lemmas.

\begin{example}%
\label{ex:partial-miss-lemma}
Let $\via{}\defas(x=0)\lor(x=1)$ with
$\allalpha\defas\set{(x=0),(x=1)}$. Consider a partial AllSMT
disjoint enumeration as performed by \Cref{alg:proj-all-smt} with
$\allalphaprime=\allalpha$ and minimization enabled.
A possible run may proceed as follows.

The SMT solver sets
$\neg(x=0)$, so that unit propagation forces $(x=1)$. The resulting
total assignment $\etaa{}\defas\neg(x=0)\wedge(x=1)$ is \T-consistent
(line~\ref{line:getassignment}). Minimization then drops $\neg(x=0)$
and returns the partial assignment $\mua{1}\defas(x=1)$
(line~\ref{line:minimize}). Notice that no \T-conflict is exposed, and
no \T-lemma is generated. The enumerator blocks the returned cube by
adding $\neg(x=1)$ (line~\ref{line:block}). On the next iteration,
$\neg(x=1)$ is enforced, and unit propagation forces $(x=0)$.
The resulting total assignment
$\etaa{2}\defas(x=0)\wedge\neg(x=1)$ is again \T-consistent
(line~\ref{line:getassignment}). Minimization
(line~\ref{line:minimize}) cannot reduce \etaa{2} because of the
blocking clause $\neg(x=1)$, 
and $\etaa{2}$ is then blocked
by adding $\neg(x=0)\vee(x=1)$ (line~\ref{line:block}), terminating the search.

Along this run, no \T-conflict is encountered, hence no \T-lemma is generated. This is because the SMT solver never explored the \T-inconsistent total assignment $(x=0)\wedge(x=1)$ which belongs to $\ITTA{\vi}$, and no \T-lemma is generated to rule it out.

\end{example}

This example shows that naive partial AllSMT enumeration \emph{does not} %
reliably compute $\TLEMMAS{\vi}$. Enumerating total assignments, however, is typically infeasible on real-world instances due to prohibitive computational time and memory blow-ups. In fact, modern enumeration research emphasizes the importance of producing short assignments to scan the search space more efficiently (see e.g.,~\cite{friedEntailingGeneralizationBoosts2024,masinaCNFConversionSAT2025,spallittaDisjointProjectedEnumeration2025}). These observations motivate the need for methods that produce $\TLEMMAS{\vi}$ without exhaustive enumeration, while still leveraging compact partial assignments.

\begin{remark}%
  \label{rem:projection-partitioning}  
  For the case $\allalphaprime\subset\allalpha$, the algorithm, in general, does not find a complete set of \T-lemmas ruling out \ITTA{\vi}. In~\sref{sec:projection} and~\sref{sec:partitioning}, however, we will show some special cases for $\allalphaprime$ where the projected version is used without affecting completeness.
\end{remark}

\subsection{Divide and Conquer (and Parallelize)}%
\label{sec:divide-and-conquer}
\begin{algorithm}[t]
\caption{$\lemmaenumdc(\via{}, \allalphaprime)$}\label{alg:lemmas:divide-and-conquer}
\Input $\via{}$: a \T-formula, $\allalphaprime$: a set of atoms to project on (default: $\allalpha$) \\
\Output $\TLEMMASANY{}$: a set of \T-lemmas
\begin{algorithmic}[1]
\State $\modelsaprimepartial,\lemmaspartial \gets \projectedenumpartial(\via{}, \allalphaprime)$\label{line:dc:partial} 
\State $\TLEMMASANY \gets \lemmaspartial$\label{line:dc:lemmaspartial}
\ForAll{$\mu \in \modelsaprimepartial$}  \Comment{Parallel loop}\label{line:dc:parallel-for}
    \State $\_, \lemmastotal \gets \projectedenumtotal((\vi \wedge \mu \wedge \bigwedge_{C\in{\lemmaspartial}} C)[\allalpha], \allalphaprime)$\label{line:dc:total}
    \State $\TLEMMASANY \gets \TLEMMASANY \cup \lemmastotal$\label{line:dc:extend-lemmas}
\EndFor\label{line:dc:parallel-for-end}
\State \Return $\TLEMMASANY$\label{line:dc:return}
\end{algorithmic}
\end{algorithm}
Although %
partial enumeration alone is not sufficient to find a set of \T-lemmas that rules out $\ITTA{\vi}$, it provides a useful decomposition of the Boolean search space into a family of residual, \T-satisfiable subproblems.
This decomposition is inspired by the cube-and-conquer approach in SAT~\cite{heuleCubeConquerGuiding2012}, recently used also for model counting~\cite{xuEmbarrassinglyParallelModel2025}.

To exploit this decomposition, we propose a divide-and-conquer
approach to enumerate \T-lemmas, outlined in
\Cref{alg:lemmas:divide-and-conquer}.
We consider first the case where $\allalphaprime=\allalpha$.

First, the algorithm performs a {\em partial} AllSMT enumeration on $\via{}$, obtaining a set of partial truth assignments $\modelsapartial=\{\mua{1},\dots,\mua{k}\}$, and also a set of \T-lemmas $\lemmaspartial$ learned during the enumeration (line~\ref{line:dc:partial}). 

Then, for each $\mua{i}\in\modelsapartial$
(lines~\ref{line:dc:parallel-for}--\ref{line:dc:parallel-for-end}),
the algorithm performs a {\em total} AllSMT enumeration on $\vi \land
\mua{i}$, conjoining also the previously learned \T-lemmas $\lemmaspartial$,
to prevent the SMT solver from rediscovering them in the residual subproblems (line~\ref{line:dc:total}). After each iteration, we save the lemmas learned by total AllSMT.
Overall, rather than paying for one large total AllSMT run
on $\via{}$, we pay for one cheaper decomposition step and for many
smaller total AllSMT runs on restricted residual formulas,  whose
overall cost is typically lower than one monolithic total enumeration on $\via{}$.

Moreover, the residual enumeration tasks (line~\ref{line:dc:total}) can be executed in parallel on multi-core machines, e.g., by assigning each $\mu_i$ to a process and finally merging the $\T$-lemmas produced by the processes.
In our architecture, processes execute in isolation, and distinct processes may independently discover identical or redundant lemmas.
In practice, in~\sref{sec:experiments} we show empirically that the number of lemmas found is comparable to that of the AllSMT baseline.

The correctness of this approach follows from the fact that overall, in the different iterations, all the total truth assignments are explored, thus for each $\rhoa{} \in \ITTA{\vi}$, a $\T$-lemma is eventually generated that rules it out.

We remark that, while our algorithm guarantees that the $\mua{i}$ are
disjoint, \T-satisfiable, and propositionally satisfy $\via{}$, the
latter fact is not a strict requirement for the correctness of the approach.
Indeed, this is only one of the possible decompositions of the search space, and any family of partial assignments that covers all propositional assignments of $\vi$ would suffice. This opens up the possibility of designing alternative heuristics to decompose the search space.

For the case $\allalphaprime\subset\allalpha$, \Cref{rem:projection-partitioning} applies for \Cref{alg:lemmas:divide-and-conquer} as well.

\subsection{Projection on Theory-Atoms}%
\label{sec:projection}
\begin{algorithm}[t]
\caption{$\lemmaenumproj(\via{}, \enumerator)$}\label{alg:lemmas:projection} 
\Input $\via{}$: a \T-formula, $\enumerator$: a \T-lemma enumerator \\
\Output $\TLEMMASANY{}$: a set of \T-lemmas
\begin{algorithmic}[1]
\State $\allalphaT \gets \getTheoryAtoms(\allalpha)$ \Comment{Extract theory atoms from $\allalpha$}
\State $\TLEMMASANY{} \gets \enumerator(\via{}, \allalphaT)$
\State \Return $\TLEMMASANY{}$
\end{algorithmic}
\end{algorithm}
Enumerating all total truth assignments, even when parallelized, is often infeasible on real-world instances. In this section, we spotlight that this is often unnecessary to find a complete set of \T-lemmas, starting from the following observation.

Truth assignments explored during an AllSMT enumeration, whether via the classic or the divide-and-conquer approach, consist of two components: an assignment to \T-atoms and an assignment to \B-atoms.
We note that the \T-inconsistency of a truth assignment is determined exclusively by its projection onto the \T-atoms.
Consequently, a standard AllSMT procedure may inefficiently enumerate multiple truth assignments that share an identical \T-consistent projection but differ in their Boolean assignments. This redundancy is computationally wasteful, as variations in the Boolean assignment do not generate new \T-lemmas.

We can address this inefficiency by performing an {AllSMT
  enumeration \emph{projected on the \T-atoms of \vi}. The correctness of this approach follows from the following theorem (proof in
  \Cref{app:proofs:teo:lemmasout-exists}%
  ).
\begin{theorem}%
    \label{teo:lemmasout-exists}
    Let \via{} be a \T-formula.
    Let \allalphaB{} and \allalphaT{} be the sets of Boolean and theory atoms in \allalpha{}, respectively.

    Let $\BCAT{\vi}\defas\set{\neg\etaaT{}\mathrm{\ s.t.\ }\etaaT{}\wedge\etaaB{}\in\CTTA{\vi}}$, and let \TLEMMASANY{} be a set of \T-lemmas s.t.\ $\BCAT{\vi}\cup\TLEMMASANY{}$ rules out $\ITTA{\vi}$.
    
    Then $\TLEMMASANY{}$ rules out $\ITTA{\vi}$.
\end{theorem}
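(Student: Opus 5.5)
The plan is to argue pointwise, one $\rhoa{}\in\ITTA{\vi}$ at a time, and to show that no clause of $\BCAT{\vi}$ can be the one ruling it out. What remains must then be a clause of $\TLEMMASANY{}$. By \Cref{def:ruleout-nobetas}, it suffices to show that for every $\rhoa{}\in\ITTA{\vi}$ there is some $C\in\TLEMMASANY{}$ with $\rhoa{}\pmodels\neg C$.

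Fix $\rhoa{}\in\ITTA{\vi}$ and split it as $\rhoa{}=\rhoaT{}\wedge\rhoa{}_{\B}$, where the first conjunct collects its literals on \allalphaT{} and the second its literals on \allalphaB{}. Since $\BCAT{\vi}\cup\TLEMMASANY{}$ rules out $\ITTA{\vi}$, there is a clause $C$ in that union with $\rhoa{}\pmodels\neg C$. Suppose, towards a contradiction, that $C\in\BCAT{\vi}$. Then $C=\neg\etaaT{}$ for some $\etaaT{}\wedge\etaaB{}\in\CTTA{\vi}$.

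Because $\etaaT{}$ is a conjunction of literals and $\rhoa{}$ is total on $\allalpha\supseteq\allalphaT$, the condition $\rhoa{}\pmodels\etaaT{}$ forces every literal of $\etaaT{}$ to occur in $\rhoa{}$. Both are total on \allalphaT{}, so $\rhoaT{}=\etaaT{}$.

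The key step is to show this is impossible.
\begin{itemize}
\item Since $\etaaT{}\wedge\etaaB{}$ is \T-consistent, its sub-assignment $\etaaT{}$ is \T-consistent.
\item Boolean atoms are uninterpreted by \T{} and do not occur in \T-atoms. Hence any \T-model of $\etaaT{}$ can be extended with the values dictated by $\rhoa{}_{\B}$, giving a \T-model of $\rhoaT{}\wedge\rhoa{}_{\B}=\rhoa{}$.
\item So $\rhoa{}$ is \T-consistent, contradicting $\rhoa{}\in\ITTA{\vi}$. Equivalently, \T-inconsistency of a total assignment depends only on its \T-part, and $\rhoaT{}=\etaaT{}$ is consistent.
\end{itemize}
Therefore $C\in\TLEMMASANY{}$, which gives the claim.

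The only delicate point is the independence step: Boolean atoms must be free propositional variables, not constrained by \T. I would state this explicitly as the justification. Under theory combination, the \B-atoms are simply not part of any theory signature, so the same argument applies.
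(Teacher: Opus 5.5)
Your proof is correct and follows essentially the same route as the paper's. The paper also argues that the clauses of $\BCAT{\vi}$ can only exclude \T-consistent assignments on $\allalphaT$, and that \T-inconsistency of a total assignment depends only on its \T-part, so the clause ruling out each element of the inconsistent set must come from $\TLEMMASANY$. You simply make these two steps explicit, namely the identification of the \T-part of $\rho$ with $\eta$'s \T-part and the extension of a \T-model by the values of the Boolean atoms.
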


This leads to the enumeration procedure outlined in \Cref{alg:lemmas:projection}. This procedure invokes an AllSMT-based \T-lemma enumerator (e.g., the one in \Cref{alg:lemmas:allsmt} or \Cref{alg:lemmas:divide-and-conquer}), projecting the enumeration onto \T-atoms only.
If the lemma enumerator is based on \Cref{alg:proj-all-smt}, which terminates when $\vi\wedge\bigwedge_i{\neg\etaaT{i}}\wedge\bigwedge_{C\in\TLEMMASANY{}}{C}\bequiv\bot$, it follows from \Cref{teo:lemmasout-exists} that the returned set \TLEMMASANY{} of \T-lemmas rules out \ITTA{\vi}.

\begin{remark}%
\label{rem:projection-efficiency}
Note that the projected assignments can be \emph{exponentially fewer} than the assignments on the full set of atoms. Also, the blocking clauses added by \Cref{alg:proj-all-smt} (line~\ref{line:block}) are much shorter, as they only contain \T-atoms. This can lead to a drastic reduction in both enumeration time and memory usage.
\end{remark}

\subsection{Theory-driven partitioning}%
\label{sec:partitioning}

\begin{algorithm}[t]
\caption{$\lemmaenumpartitioning(\via{}, \enumerator)$}\label{alg:lemmas:partitioning} 
\Input $\via{}$: a \T-formula, $\enumerator$: a \T-lemma enumerator \\
\Output $\TLEMMASANY{}$: a set of \T-lemmas
\begin{algorithmic}[1]
\State $\TLEMMASANY \gets \emptyset$\label{line:partitioning:init}
\State $\set{\allalpha_1, \ldots, \allalpha_n} \gets \partitionAtoms(\allalpha)$ \Comment{Partition atoms as in \Cref{prop:partitioning}}\label{line:partitioning:partition}
\ForAll{$i \in [1..n]$}\label{line:partitioning:loop}
    \State $\TLEMMASANY{}_i \gets \enumerator((\vi\wedge\bigwedge_{C\in\TLEMMASANY} C)[\allalpha], \allalpha_i)$\label{line:partitioning:enumeration}
    \State $\TLEMMASANY \gets \TLEMMASANY \cup \TLEMMASANY{}_i$
\EndFor\label{line:partitioning:end}
\State \Return $\TLEMMASANY$\label{line:partitioning:return}
\end{algorithmic}
\end{algorithm}

We now present a further improvement for \T-lemmas enumeration, for the
case where the set of atoms of 
$\via{}$ can be partitioned into two or more disjoint sets wrt.\ variables and uninterpreted symbols.
We first recall the following property, which characterizes the partitioning of atoms that we will use in the following results.
\begin{property}%
    \label{prop:partitioning}
    Let $\allalpha{}\defas\allalphared{}\cup\allalphablu{}$ be a set of atoms s.t.\ no atom in \allalphared{} shares variables or uninterpreted symbols with any atom in \allalphablu{}.
    Then, for any truth assignment
    $\etaa{}\defas\etaared{}\wedge\etaablu{}$,
    $\etaa{}$ is \T-satisfiable iff both $\etaared{}$ and $\etaablu{}$ are \T-satisfiable. 
\end{property}

Consequently, if the set of atoms of $\vi$ can be partitioned according to the conditions of \Cref{prop:partitioning}, then in order to rule out a \T-inconsistent truth assignment $\rhoa{}\defas\rhoared{}\wedge\rhoablu{}\in\ITTA{\vi}$, a \T-lemma must rule out $\rhoared{}$ or $\rhoablu{}$.

This suggests a decomposition strategy: an AllSMT enumeration projected on \allalphared{} will find a set of \T-lemmas that rule out all $\rhoa{}\in\ITTA{\vi}$ such that $\rhoared{}$ is \T-inconsistent.
Dually, an AllSMT enumeration projected on \allalphablu{}, will find a set of \T-lemmas that rule out all $\rhoa{}\in\ITTA{\vi}$ such that $\rhoablu{}$ is \T-inconsistent.
Thus, the enumeration of \T-lemmas can be decomposed into two separate enumerations.

This intuition is formalized in the following theorem (proof in
  \Cref{app:proofs:teo:lemmasout-partitioning}%
  ).

\begin{theorem}%
    \label{teo:lemmasout-partitioning}
    Let \via{} be a \T-formula, s.t.\ $\allalpha{}\defas\allalphared{}\cup\allalphablu{}$ and no atom in \allalphared{} shares variables or uninterpreted symbols with any atom in \allalphablu{}.

    Let $\BCARED{\vi}\defas\set{\neg \etaared{} \mathrm{\ s.t.\ } \etaared\wedge\etaablu{}\in\CTTA{\vi}}$, and let \TLEMMASANY{} be a set of \T-lemmas s.t.\ $\BCARED{\vi}\cup\TLEMMASANY{}$ rules out $\ITTA{\vi}$.
    
    Let $\viprimea{}\defas\vi\wedge\bigwedge_{C\in\TLEMMASANY{}} C$.

    Let $\BCABLU{\viprime{}}\defas\set{\neg \etaablu \mathrm{\ s.t.\ }\etaared\wedge\etaablu{}\in\CTTA{\viprime{}}}$, and let \TLEMMASANYPRIME{} be a set of \T-lemmas s.t.\ $\BCABLU{\viprime{}}\cup\TLEMMASANYPRIME{}$ rules out $\ITTA{\viprime{}}$.

    Then $\TLEMMASANY{}\cup\TLEMMASANYPRIME{}$ rules out $\ITTA{\vi}$.

\end{theorem}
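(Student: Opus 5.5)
The plan is a direct case analysis on an arbitrary $\rhoa{}\in\ITTA{\vi}$, using \Cref{prop:partitioning} to locate the source of its \T-inconsistency. Write $\rhoa{}\defas\rhoared{}\wedge\rhoablu{}$, where each part is total on its block of atoms. Since $\rhoa{}$ is total, every clause on \allalpha{} is either \B-satisfied or \B-falsified by it. So the goal is to show that some clause in $\TLEMMASANY{}\cup\TLEMMASANYPRIME{}$ is falsified by $\rhoa{}$. Throughout, I assume (as implicit in the statement, since $\viprimea{}$ is written as a formula on \allalpha{}) that all lemmas in \TLEMMASANY{} and \TLEMMASANYPRIME{} are built on atoms in \allalpha{}.

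\emph{Step 1 (first enumeration).} By hypothesis, $\BCARED{\vi}\cup\TLEMMASANY{}$ rules out $\rhoa{}$. If the clause doing so lies in \TLEMMASANY{}, we are done. Otherwise $\rhoa{}\pmodels\eta_1$ for some $\eta_1\wedge\eta_2'\in\CTTA{\vi}$. Because both $\rhoared{}$ and $\eta_1$ are total on \allalphared{}, this gives $\rhoared{}=\eta_1$. As $\eta_1$ is a sub-assignment of a \T-consistent assignment, $\rhoared{}$ is \T-consistent. \Cref{prop:partitioning} then forces $\rhoablu{}$ to be \T-inconsistent.

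\emph{Step 2 (transfer to $\viprime{}$).} In this remaining case, no clause of \TLEMMASANY{} is falsified by $\rhoa{}$, so $\rhoa{}$ \B-satisfies every $C\in\TLEMMASANY{}$. Together with $\rhoa{}\pmodels\vi$, this yields $\rhoa{}\pmodels\viprimea{}$. Since $\rhoa{}$ is \T-inconsistent, we get $\rhoa{}\in\ITTA{\viprime{}}$.

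\emph{Step 3 (second enumeration).} By hypothesis, $\BCABLU{\viprime{}}\cup\TLEMMASANYPRIME{}$ rules out $\rhoa{}$. Suppose it were a blocking clause $\neg\eta_2$ with $\eta_1''\wedge\eta_2\in\CTTA{\viprime{}}$. Then, by totality, $\rhoablu{}=\eta_2$, which is \T-consistent. This contradicts Step 1. Hence some clause of \TLEMMASANYPRIME{} rules out $\rhoa{}$.

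Combining the cases, every $\rhoa{}\in\ITTA{\vi}$ is ruled out by $\TLEMMASANY{}\cup\TLEMMASANYPRIME{}$. I expect the main obstacle to be the bookkeeping in Step 2, namely justifying that $\rhoa{}$ really lies in $\ITTA{\viprime{}}$. This needs the lemma atoms to lie in \allalpha{} and the totality of $\rhoa{}$, so that ``not falsified'' means ``satisfied''. The rest is the same totality argument as in \Cref{teo:lemmasout-exists}, applied once per block.
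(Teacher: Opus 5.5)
Your proof is correct and is essentially the paper's argument. The first set of lemmas rules out every $\rho$ whose $\alpha_1$-part is \T-inconsistent; any remaining $\rho$ has a \T-inconsistent $\alpha_2$-part by the partitioning property, lies in $P_{\allalpha}(\varphi')$, and cannot be ruled out by a blocking clause on $\alpha_2$, so the second set of lemmas must do it. Your per-assignment case analysis simply makes explicit the totality and membership bookkeeping, including the assumption that lemma atoms lie in $\allalpha$, which the paper's proof leaves implicit.
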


Note that this result can be easily generalized to the case where the
set of atoms can be partitioned into more than two sets. Indeed, if $\allalphablu{}$ can be further partitioned, we can apply the same
reasoning recursively to it.

This result naturally leads to the enumeration strategy in \Cref{alg:lemmas:partitioning}.
The algorithm first partitions the set of atoms of $\via{}$ as in \Cref{prop:partitioning}, obtaining $\set{\allalpha_1, \ldots, \allalpha_n}$ (line~\ref{line:partitioning:partition}).
This step can be performed efficiently, e.g., using a union-find data
structure to compute the connected components of the graph of
atoms, where two atoms are directly connected if they share a variable or an uninterpreted symbol.
Note that, in principle, every \B-atom can be considered as belonging to a separate, singleton partition.
In practice, we assume that all the \B-atoms belong to the same partition, and can be handled via projection as discussed in \sref{sec:projection}. 

Then, for each partition $\allalpha_i$, the algorithm performs an
AllSMT-based \T-lemma enumeration projected on $\allalpha_i$
(line~\ref{line:partitioning:enumeration}), and accumulates the
enumerated \T-lemmas in \TLEMMASANY{}
(line~\ref{line:partitioning:enumeration}). Note that, as in
\Cref{teo:lemmasout-partitioning}, we conjoin the returned \T-lemmas
to the input formula in each enumeration, so as to rule out \T-inconsistent assignments that have already been ruled out by previous enumerations.
Finally, we return the accumulated set \TLEMMASANY{} of \T-lemmas (line~\ref{line:partitioning:return}), which, by \Cref{teo:lemmasout-partitioning}, rules out $\ITTA{\vi}$.

Similar considerations as in \Cref{rem:projection-efficiency} apply in
this case. Indeed, if $n>1$ then the improvement could be dramatic, as
we decompose a single total enumeration step into $n$ projected enumerations on \emph{subsets} of atoms.

\section{Experimental Evaluation}
We evaluate the performance of the \T-lemma enumeration techniques
of~\sref{sec:enum-lemmas} in terms of \emph{time} required
to enumerate \T-lemmas. We also analyze the \emph{number} of
enumerated \T-lemmas, and their median \emph{size}. We consider the following enumerators:
\begin{itemize}
  \item \expbaseline, i.e., standard AllSMT as in~\cite{micheluttiCanonicalDecisionDiagrams2024} (\Cref{alg:lemmas:allsmt}, \sref{sec:baseline}); 
  \item \expdc, i.e., divide-and-conquer AllSMT (\Cref{alg:lemmas:divide-and-conquer}, \sref{sec:divide-and-conquer});
  \item \expdcproj, i.e., \expdc{} plus projection on \T-atoms (\Cref{alg:lemmas:projection}, \sref{sec:projection});
  \item \expdcprojpart, i.e., \expdcproj{} plus partitioning (\Cref{alg:lemmas:partitioning}, \sref{sec:partitioning}).
\end{itemize}

\paragraph{Reproducibility.}
All the experiments have been executed on three equivalent machines with an Intel(R) Xeon(R) Gold 6238R @ 2.20GHz CPU and 128GB of RAM.
Divide-and-conquer-based enumerators use 45 cores for parallelization.
For each instance, we set a timeout of 3600s.
We use PySMT~\cite{garioPySMTSolveragnosticLibrary2015} to manipulate SMT formulas and interface with solvers, and \mathsatfive~\cite{mathsat5_tacas13} as AllSMT solver, as it implements the enumeration procedure described in \Cref{alg:proj-all-smt}.
For finding short partial assignments for non-CNF formulas we use the technique in~\cite{masinaCNFConversionSAT2025}.
The code for the algorithms~\footnote{\url{https://github.com/ecivini/tlemmas-enumeration}} and for reproducing the experiments~\footnote{\url{https://github.com/ecivini/tlemmas-enumeration-testbench}} are publicly available.

\paragraph{Benchmarks.}

We considered two sets of benchmarks.
The first set consists of all the 450 \smtlarat{} synthetic instances from~\cite{micheluttiCanonicalDecisionDiagrams2024}.
These problems were generated randomly by nesting Boolean operators up to a fixed depth varying from 4 to 8, involving 10 Boolean variables and 10 real variables.

The second set consists of 90 \smtlarat{} instances encoding temporal planning problems from the industrial-inspired ``Painter'' domain~\cite{valentiniTemporalPlanningIntermediate2020}. For each problem, we generated the SMT encoding for action applications up to a fixed horizon $h\in\set{3,4,5}$ using the \textsc{Tempest} tool~\footnote{\url{https://github.com/fbk-pso/tempest}}. The encoding is described in~\cite{panjkovicExpressiveOptimalTemporal2023}; since~\cite{panjkovicExpressiveOptimalTemporal2023} focuses on optimal temporal planning, in our encoding the optimization component and the abstract step are discarded.
We also removed the goal constraints to increase the number of solutions and make the problems more challenging for \T-lemma enumeration.

\begin{figure}[t]
  \centering
  \begin{subfigure}{0.24\textwidth}
    \centering
    \includegraphics[width=\linewidth, alt={Scatter plot comparing the runtime of following strategies: baseline vs divide-and-conquer.}]
    {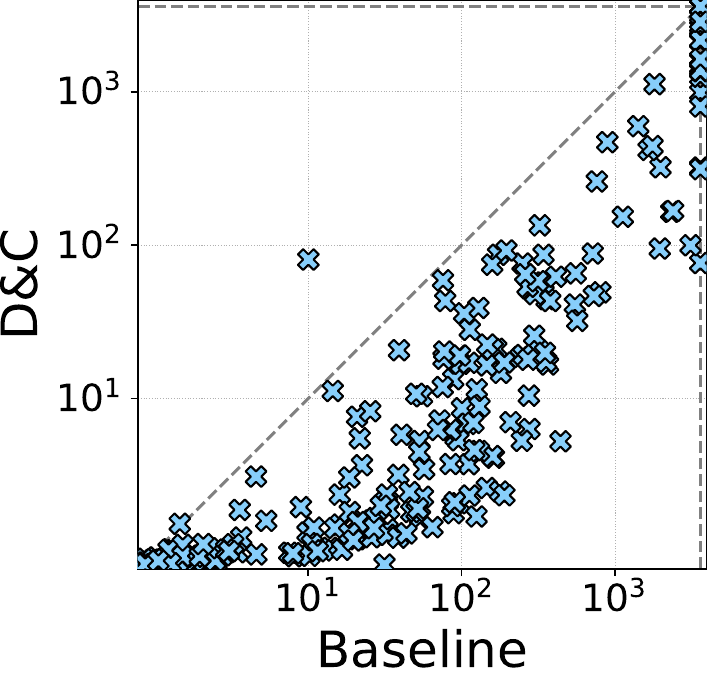}
  \end{subfigure}
  \hfill
  \begin{subfigure}{0.24\textwidth}
    \centering
    \includegraphics[width=\linewidth, alt={Scatter plot comparing the runtime of the following strategies: divide-and-conquer vs divide-and-conquer with projection.}]
    {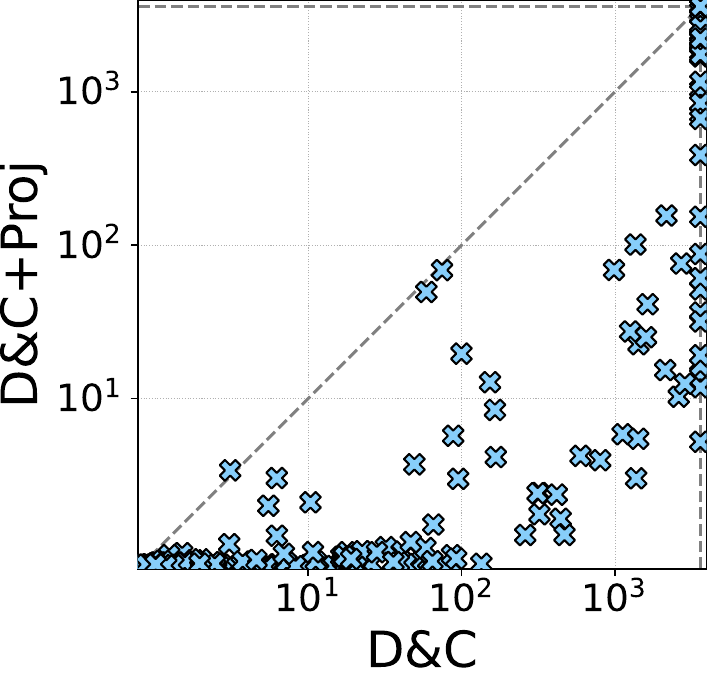}
  \end{subfigure}
  \hfill
  \begin{subfigure}{0.24\textwidth}
    \centering
    \includegraphics[width=\linewidth, alt={Scatter plot comparing the runtime of the following strategies: divide-and-conquer with projection vs divide-and-conquer with projection and partitioning.}]
    {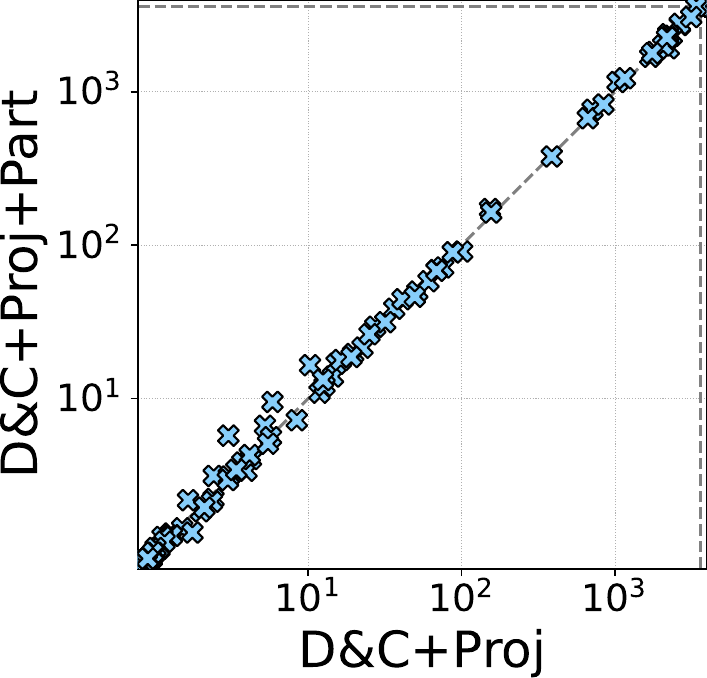}
  \end{subfigure}
  \hfill
  \begin{subfigure}{0.24\textwidth}
    \centering
    \includegraphics[width=\linewidth, alt={Scatter plot comparing the runtime of the following strategies: baseline vs divide-and-conquer with projection and partitioning.}]
    {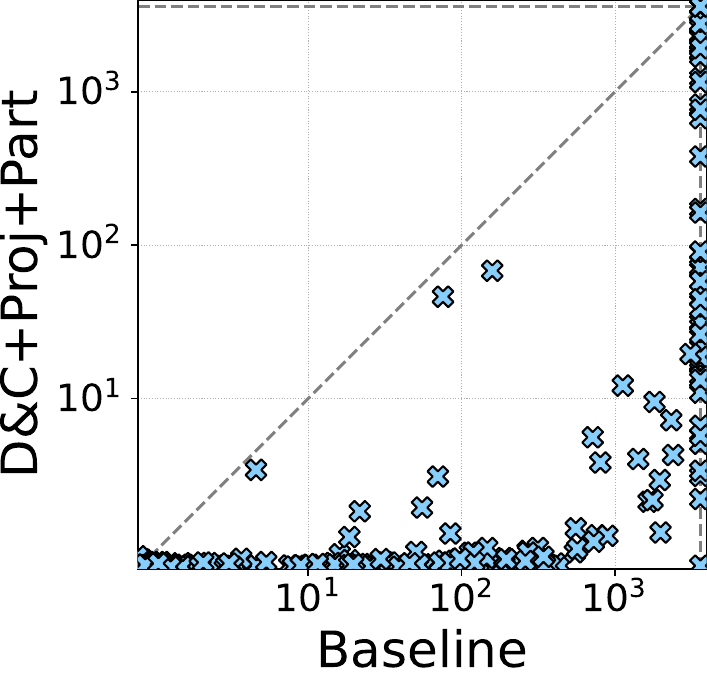}
  \end{subfigure}

  \begin{subfigure}[c]{0.48\textwidth}
    \centering
    \includegraphics[width=\linewidth, alt={Cactus plot comparing the runtime of all the strategies.}]
    {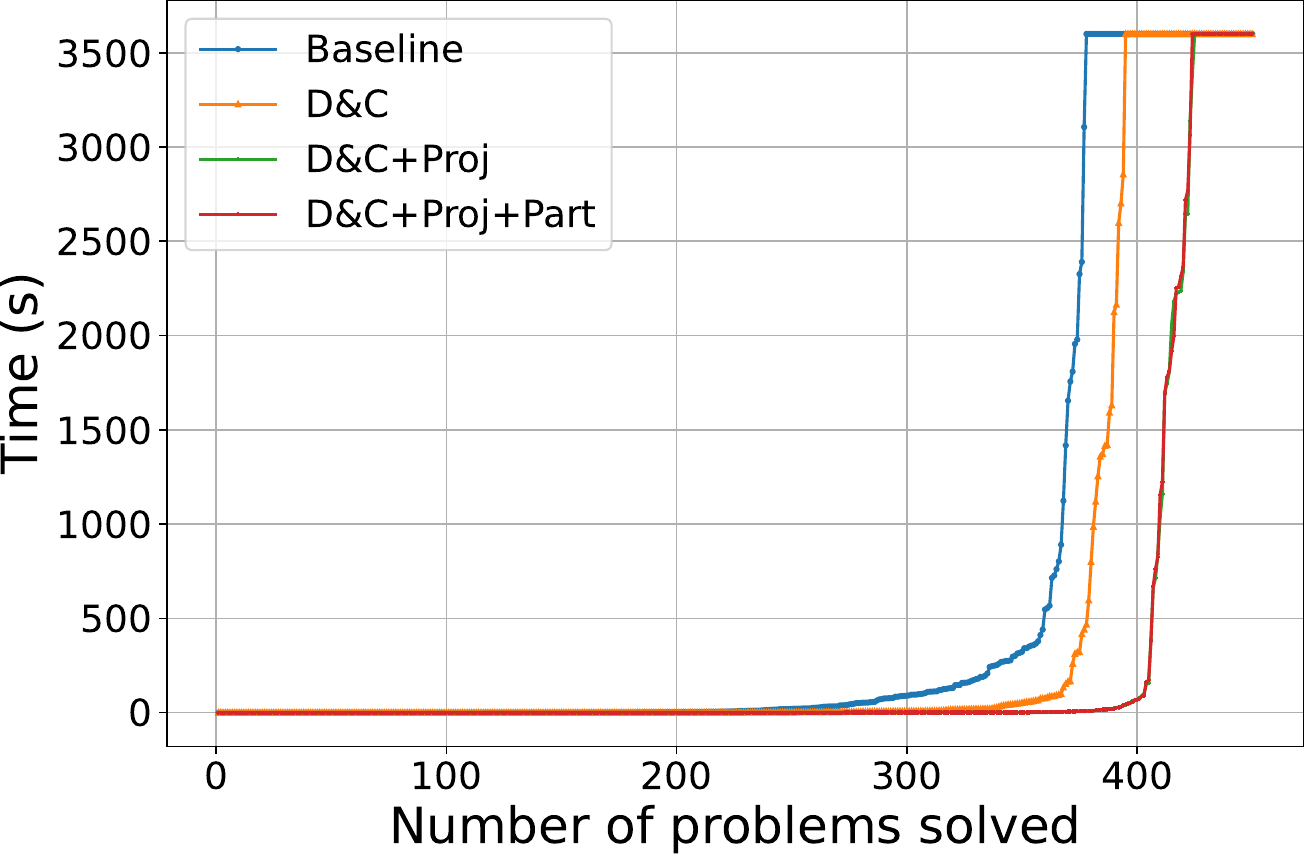}
  \end{subfigure}
  \hfill
  \begin{subfigure}[c]{0.48\textwidth}
    \small
    \centering
    \begin{tabular}{lrr}
      \textbf{Strategy} & \textbf{Timeouts} & \textbf{Total} \\
      \hline
      \expbaseline{}    & 73                & 450            \\
      \expdc{}          & 56                & 450            \\
      \expdcproj{}      & 26                & 450            \\
      \expdcprojpart{}  & 27                & 450            \\
    \end{tabular}
  \end{subfigure}
  \caption{\T-lemma enumeration time with different methods on problems from~\cite{micheluttiCanonicalDecisionDiagrams2024}.
  (Notice that in the cactus plot the green and red lines %
  are almost indistinguishable.)
  }%
  \label{fig:tlemmas-gentime-all-synth}
\end{figure}

\paragraph{Results on enumeration time.}
\Cref{fig:tlemmas-gentime-all-synth,fig:tlemmas-gentime-all-planning} present the results for enumeration time on the synthetic and temporal planning benchmarks, respectively. 
Each figure is organized as follows: the top row shows scatter plots comparing the strategies pairwise. (Notice the logarithmic scale of the axes!) We focus on incremental improvements, i.e., \expbaseline{} vs.\ \expdc, \expdc{} vs.\ \expdcproj, and \expdcproj{} vs.\ \expdcprojpart. The rightmost plot shows the overall speed-up (\expbaseline{} vs.\ \expdcprojpart). The bottom row contains a cactus plot comparing all strategies together, and a table summarizing the number of instances and timeouts.

In the problems from~\cite{micheluttiCanonicalDecisionDiagrams2024}
(\Cref{fig:tlemmas-gentime-all-synth}), we notice that the \expdc{}
significantly improves wrt.\ the baseline by up to 2 orders of
magnitude, solving 17 more problems within the timeout. The improvement is even more pronounced when we add projection, which allows for solving 30 more problems %
and drastically reduces the enumeration time %
for problems already solved by \expdc{}. 
Notably, the introduction of partitioning does not yield any improvement. This is because only 92 of 450 problems have at least two partitions. The hardest of these problems was solved by \expdcproj{} in less than 1 second; thus, partitioning has no room for improvement.
Overall, the scatter plot in \Cref{fig:tlemmas-gentime-all-synth} (top
right) is remarkable, showing a dramatic reduction in the time required by our best configuration wrt.\ the baseline.

\begin{figure}[t]
  \centering
  \begin{subfigure}{0.24\textwidth}
    \centering
    \includegraphics[width=\linewidth, alt={Scatter plot comparing the runtime of the following strategies: baseline vs divide-and-conquer.}]
    {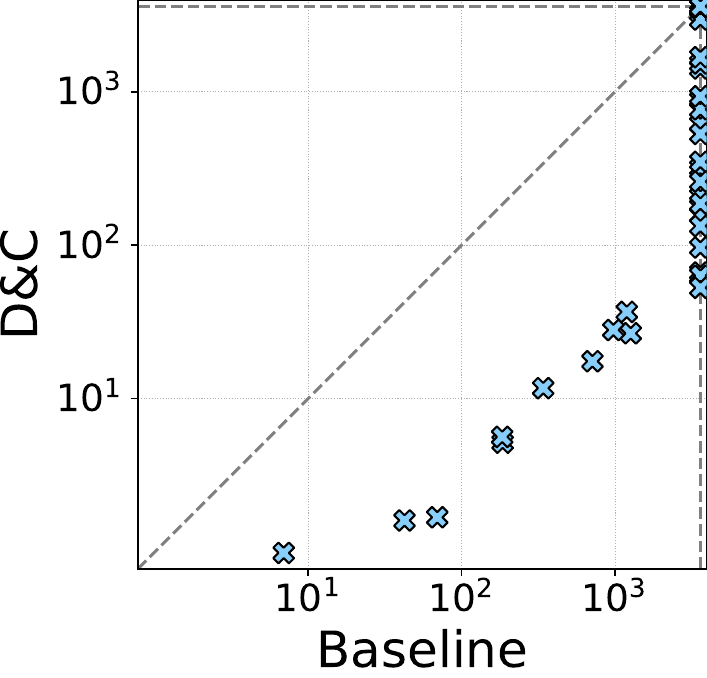}
  \end{subfigure}
  \hfill
  \begin{subfigure}{0.24\textwidth}
    \centering
    \includegraphics[width=\linewidth, alt={Scatter plot comparing the runtime of the following strategies: divide-and-conquer vs divide-and-conquer with projection.}]
    {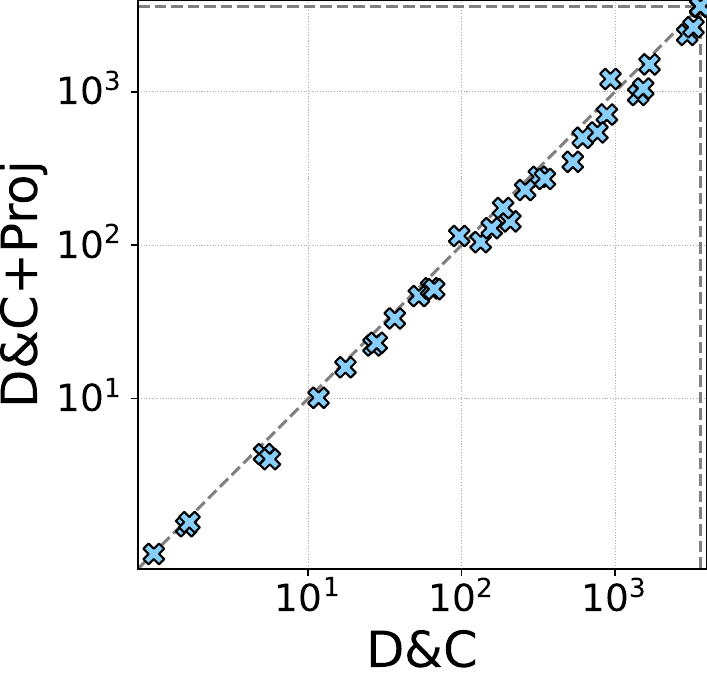}
  \end{subfigure}
  \hfill
  \begin{subfigure}{0.24\textwidth}
    \centering
    \includegraphics[width=\linewidth, alt={Scatter plot comparing the runtime of the following strategies: divide-and-conquer with projection vs divide-and-conquer with projection and partitioning.}]
    {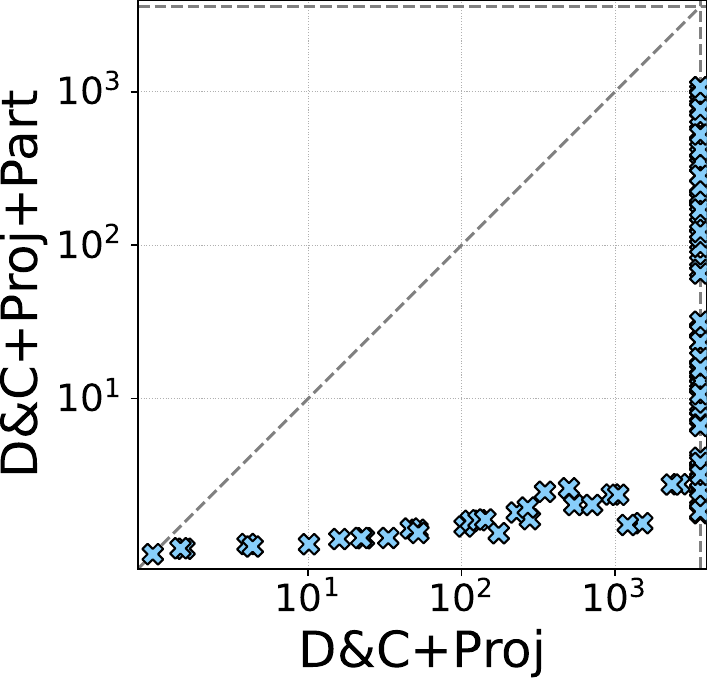}
  \end{subfigure}
  \hfill
  \begin{subfigure}{0.24\textwidth}
    \centering
    \includegraphics[width=\linewidth, alt={Scatter plot comparing the runtime of the following strategies: baseline vs divide-and-conquer with projection and partitioning.}]
    {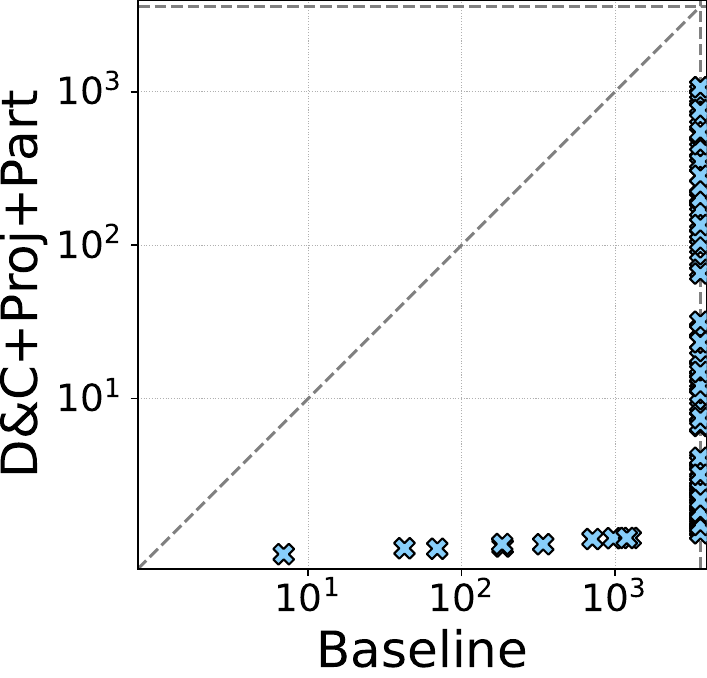}
  \end{subfigure}

  \begin{subfigure}[c]{0.48\textwidth}
    \centering
    \includegraphics[width=\linewidth, alt={Cactus plot comparing the runtime of all the strategies.}]
    {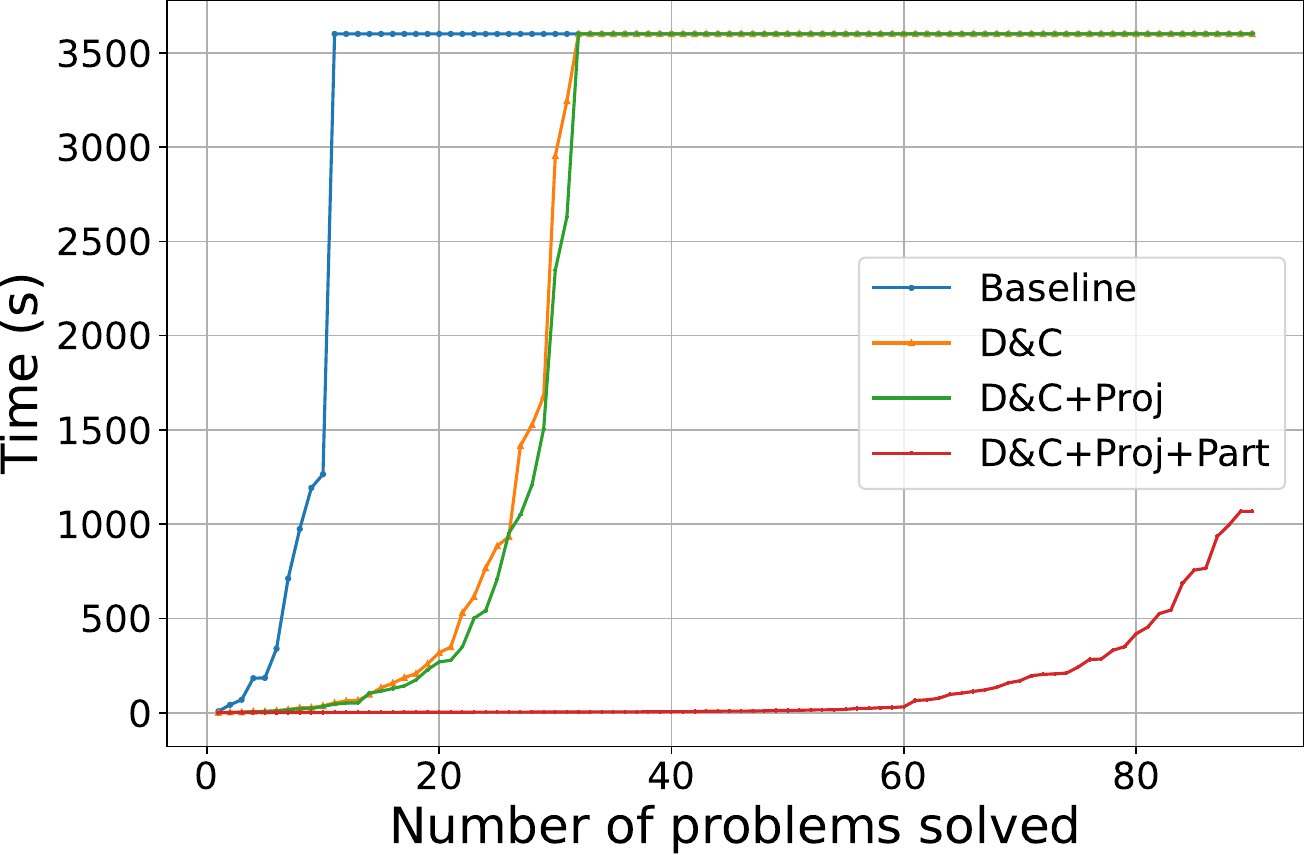}
  \end{subfigure}
  \hfill
  \begin{subfigure}[c]{0.48\textwidth}
    \small
    \centering
    \begin{tabular}{lrr}
      \textbf{Strategy} & \textbf{Timeouts} & \textbf{Total} \\
      \hline
      \expbaseline        & 80                & 90            \\
      \expdc              & 59                & 90            \\
      \expdcproj          & 59                & 90            \\
      \expdcprojpart      & 0                 & 90            \\
    \end{tabular}
  \end{subfigure}
  \caption{\T-lemma enumeration time with different methods on planning problems.}%
  \label{fig:tlemmas-gentime-all-planning}
\end{figure}

On temporal planning problems (\Cref{fig:tlemmas-gentime-all-planning}), we can see a similar trend, with some differences. While \expdc{} is much faster than the baseline, solving 21 more problems within the timeout, the addition of projection does not yield any improvement. This is likely due to the fact that in planning problems, the assignment on \T-atoms is a deterministic consequence of the assignment on \B-atoms, so that projecting does not reduce the number of \T-satisfiable truth assignments explored by the solver.
On the other hand, partitioning is extremely effective here, dramatically reducing the enumeration time. 
In these %
problems, we have five \T-atoms partitions: atoms encoding time constraints, atoms with constraints on the values of the fluents, and three partitions with constraints on the parameters of three different actions.  
By reasoning separately on these partitions, we achieve a speedup of several orders of magnitude, solving problems up to horizon 5 that were before intractable.

\paragraph{Results on the number of enumerated \T-lemmas.}
\begin{figure}[t]
  \centering
  \begin{subfigure}{0.24\textwidth}
    \centering
    \includegraphics[width=\linewidth, alt={Scatter plot comparing the number of \T-lemmas generated by the following strategies: baseline vs divide-and-conquer.}]
    {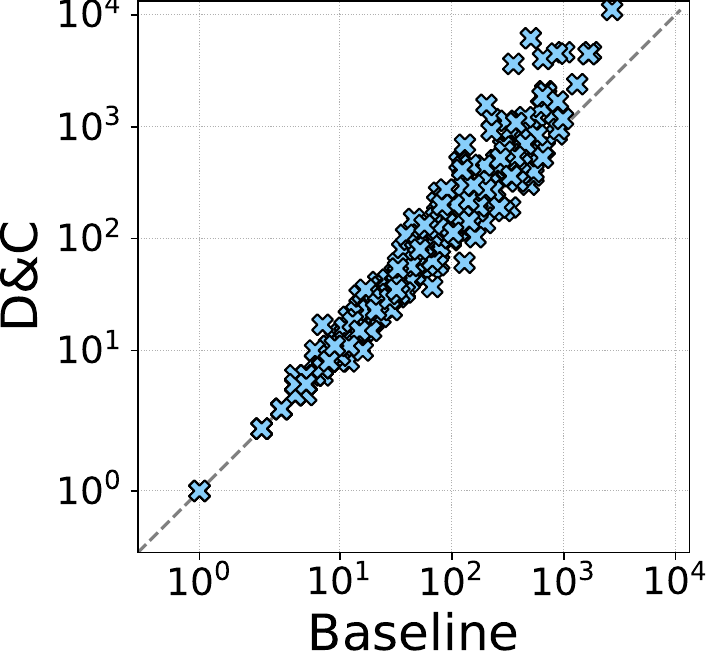}
  \end{subfigure}
  \hfill
  \begin{subfigure}{0.24\textwidth}
    \centering
    \includegraphics[width=\linewidth, alt={Scatter plot comparing the number of \T-lemmas generated by the following strategies: divide-and-conquer vs divide-and-conquer with projection.}]
    {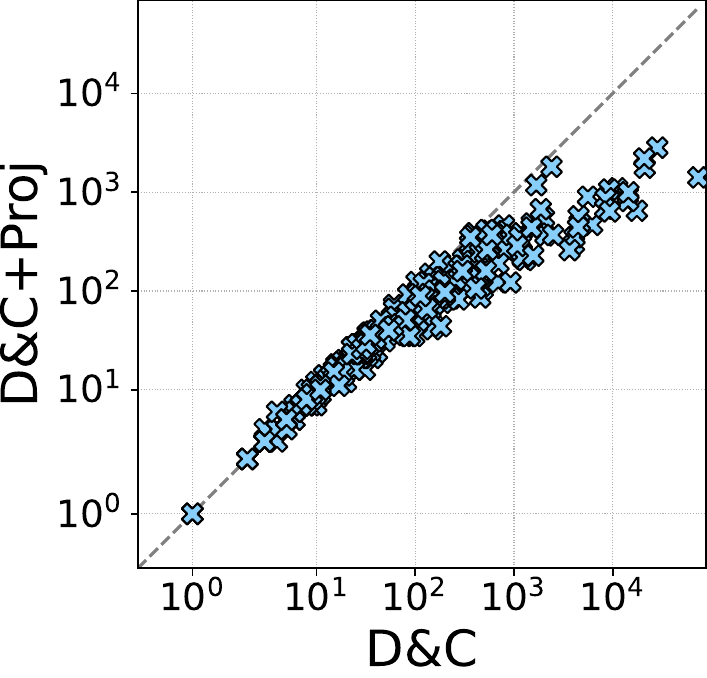}
  \end{subfigure}
  \hfill
  \begin{subfigure}{0.24\textwidth}
    \centering
    \includegraphics[width=\linewidth, alt={Scatter plot comparing the number of \T-lemmas generated by the following strategies: divide-and-conquer with projection vs divide-and-conquer with projection and partitioning.}]
    {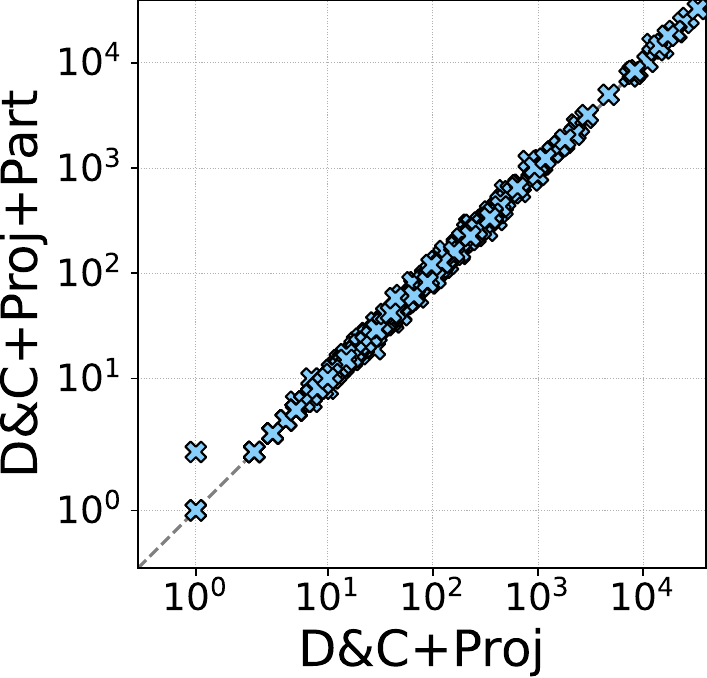}
  \end{subfigure}
  \hfill
  \begin{subfigure}{0.24\textwidth}
    \centering
    \includegraphics[width=\linewidth, alt={Scatter plot comparing the number of \T-lemmas generated by the following strategies: baseline vs divide-and-conquer with partitioning.}]
    {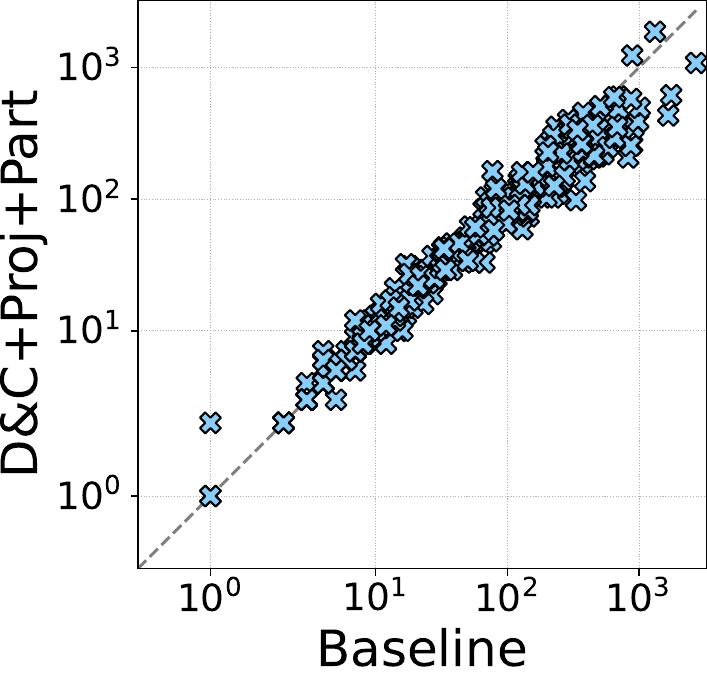}
  \end{subfigure}
  \caption{Number of \T-lemmas found by different strategies on problems from~\cite{micheluttiCanonicalDecisionDiagrams2024}.}%
  \label{fig:tlemmas-num-all-synth}
  \centering
  \begin{subfigure}{0.24\textwidth}
    \centering
    \includegraphics[width=\linewidth, alt={Scatter plot comparing the number of \T-lemmas generated by the following strategies: baseline vs divide-and-conquer.}]
    {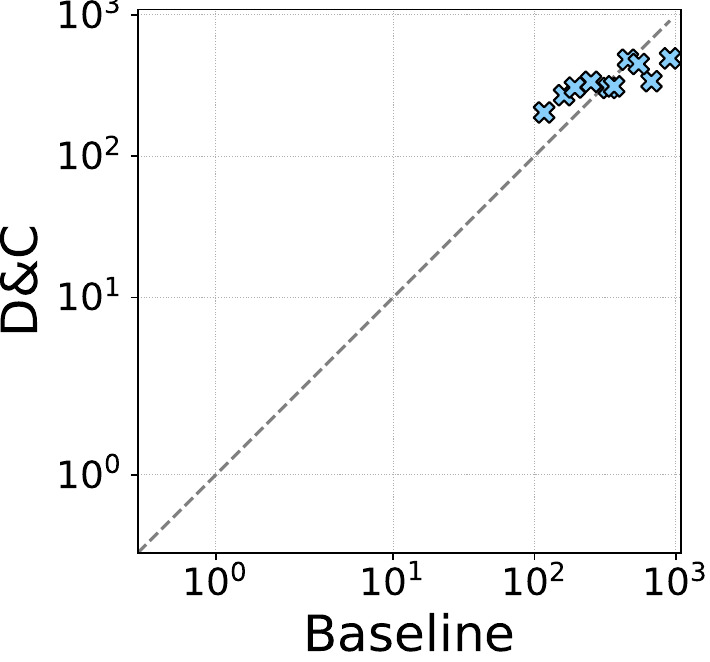}
  \end{subfigure}
  \hfill
  \begin{subfigure}{0.24\textwidth}
    \centering
    \includegraphics[width=\linewidth, alt={Scatter plot comparing the number of \T-lemmas generated by the following strategies: divide-and-conquer vs divide-and-conquer with projection.}]
    {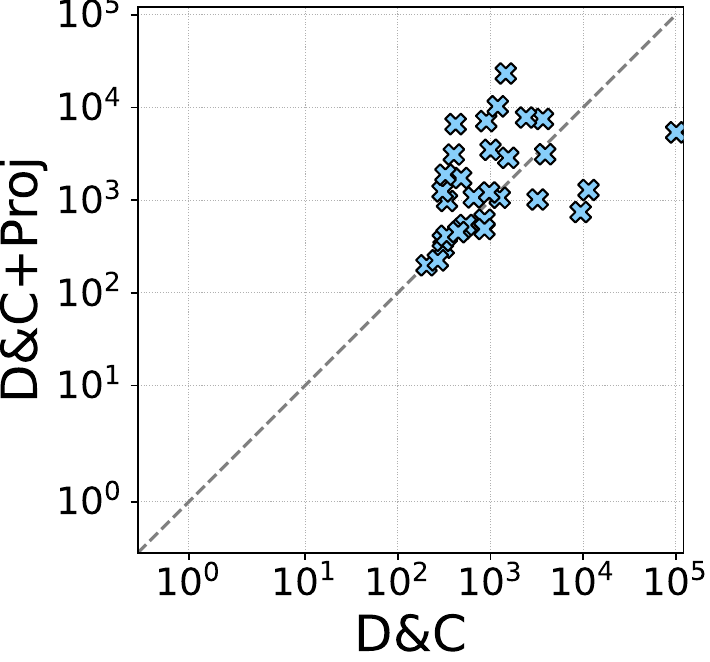}
  \end{subfigure}
  \hfill
  \begin{subfigure}{0.24\textwidth}
    \centering
    \includegraphics[width=\linewidth, alt={Scatter plot comparing the number of \T-lemmas generated by the following strategies: divide-and-conquer with projection vs divide-and-conquer with projection and partitioning.}]
    {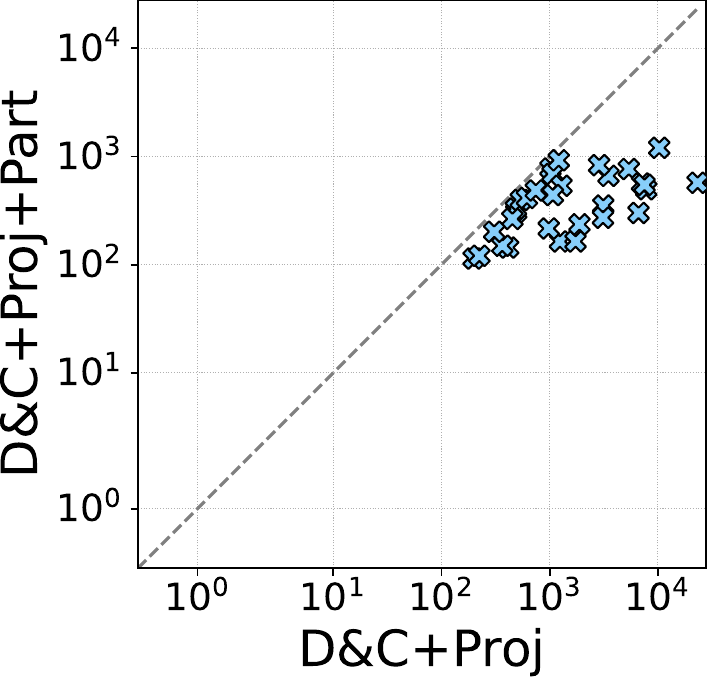}
  \end{subfigure}
  \hfill
  \begin{subfigure}{0.24\textwidth}
    \centering
    \includegraphics[width=\linewidth, alt={Scatter plot comparing the number of \T-lemmas generated by the following strategies: baseline vs divide-and-conquer with partitioning.}]
    {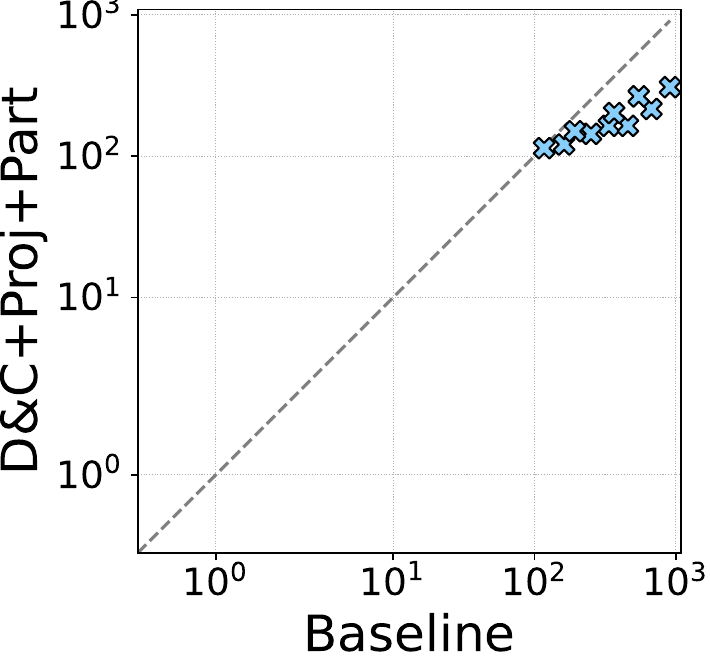}
  \end{subfigure}
  \caption{Number of \T-lemmas found by different strategies on planning problems.}%
  \label{fig:tlemmas-num-all-planning}
\end{figure}

We now analyze the number of \T-lemmas enumerated by the different strategies. \Cref{fig:tlemmas-num-all-synth,fig:tlemmas-num-all-planning} show scatter plots for the synthetic and temporal planning benchmarks, respectively.
As for the time plots, also here we present pairwise comparisons of incremental improvements. We remark that we only show points for problems solved by both the compared strategies, so that plots in \Cref{fig:tlemmas-num-all-planning} only have a limited number of points.

Overall, %
there is no big difference in the number of \T-lemmas enumerated by different strategies. We can, however, observe some trends.

First, \expdc{} tends to enumerate more \T-lemmas than \expbaseline{}. This is expected as we said in~\sref{sec:divide-and-conquer}, since the parallel %
enumeration %
can %
discover %
redundant \T-lemmas.

Second, both projection and partitioning help in reducing the number of enumerated \T-lemmas.
This fact can have multiple explanations.

One is that early pruning to the \T-solver~\cite{barrettSatisfiabilityModuloTheories2021} may cause the SMT solver to enumerate \T-lemmas that rule out also \T-unsatisfiable assignments that do not propositionally satisfy the formula. By narrowing the search space, both projection and partitioning help mitigate this undesired behavior.

Additionally, for the divide-and-conquer strategy without projection, two parallel processes can find two \T-inconsistent assignments sharing the same \T-literals and different \B-literals, inevitably generating two redundant \T-lemmas. With projection, instead, the partial assignments fed to different processes differ only on \T-literals, thereby mitigating this issue.

Lastly, %
partitioning can help enumerate smaller \T-lemmas (see next), which have stronger pruning power, thus reducing the number of enumerated \T-lemmas.

\paragraph{Results on the size of \T-lemmas.}%
\label{app:experiments:median-literals}
As a further insight, %
we analyze the median size (i.e., the number of literals)
of the \T-lemmas enumerated by the different strategies. The results are shown in \Cref{fig:tlemmas-mediansize-all-synth,fig:tlemmas-mediansize-all-planning} for synthetic and planning problems, respectively.
Since we only show points for problems solved by both the compared
strategies, so that plots in \Cref{fig:tlemmas-mediansize-all-planning} only have a limited number of points.%

We observe that different techniques do not consistently improve or worsen the size of \T-lemmas, with a notable exception for partitioning, which, on planning problems, allows for finding much smaller \T-lemmas.
We conjecture that this may be due to the fact that, with projected
enumeration, the literal selection heuristics of the SMT solver may
prefer selecting projection atoms first.
Hence, early-pruning
calls~\cite{barrettSatisfiabilityModuloTheories2021} to the \T-solver{} are more likely
to cut \T-inconsistent branches involving only projection atoms.
This hinders the generation of bigger \T-lemmas with atoms from
different partitions, which are
intrinsically redundant by~\Cref{prop:partitioning}. 
\begin{figure}[t]
    \centering
    \begin{subfigure}{0.24\textwidth}
        \centering
        \includegraphics[width=\linewidth, alt={Scatter plot comparing the median size of \T-lemmas generated by the following strategies: baseline vs divide-and-conquer.}]
        {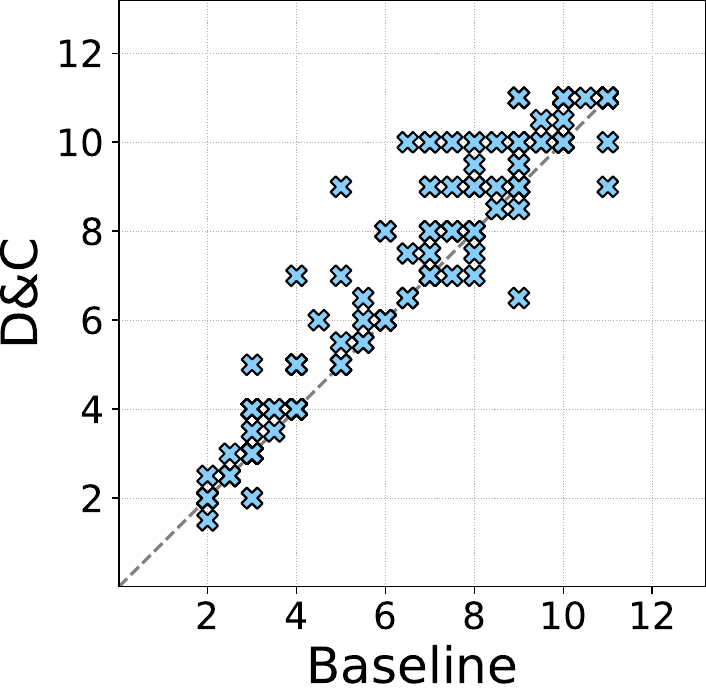}
    \end{subfigure}
    \hfill
    \begin{subfigure}{0.24\textwidth}
        \centering
        \includegraphics[width=\linewidth, alt={Scatter plot comparing the median size of \T-lemmas generated by the following strategies: divide-and-conquer vs divide-and-conquer with projection.}]
        {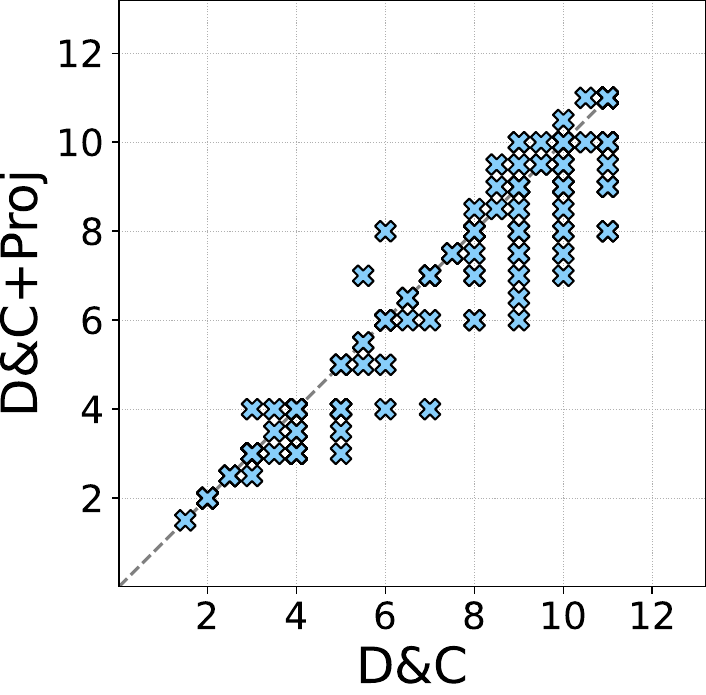}
    \end{subfigure}
    \hfill
    \begin{subfigure}{0.24\textwidth}
        \centering
        \includegraphics[width=\linewidth, alt={Scatter plot comparing the median size of \T-lemmas generated by the following strategies: divide-and-conquer with projection vs divide-and-conquer with projection and partitioning.}]
        {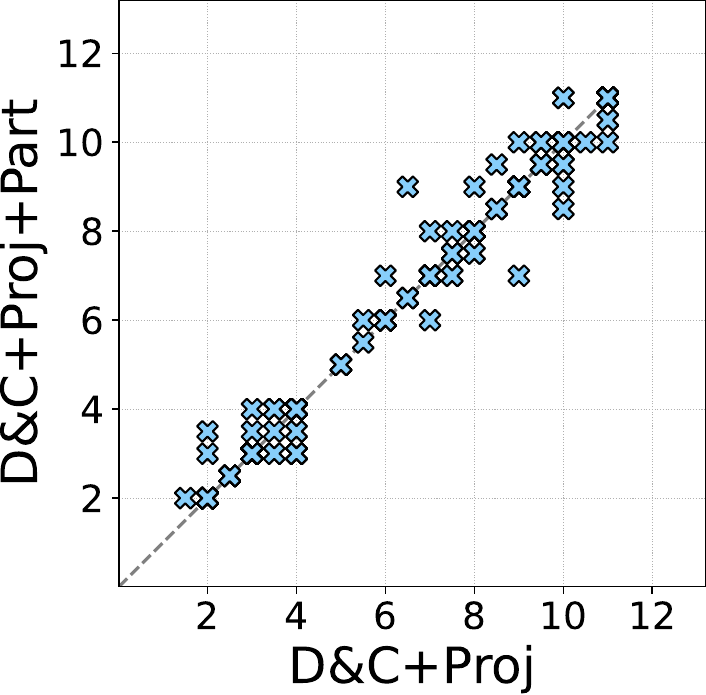}
    \end{subfigure}
    \hfill
    \begin{subfigure}{0.24\textwidth}
        \centering
        \includegraphics[width=\linewidth, alt={Scatter plot comparing the median size of \T-lemmas generated by the following strategies: baseline vs divide-and-conquer with partitioning.}]
        {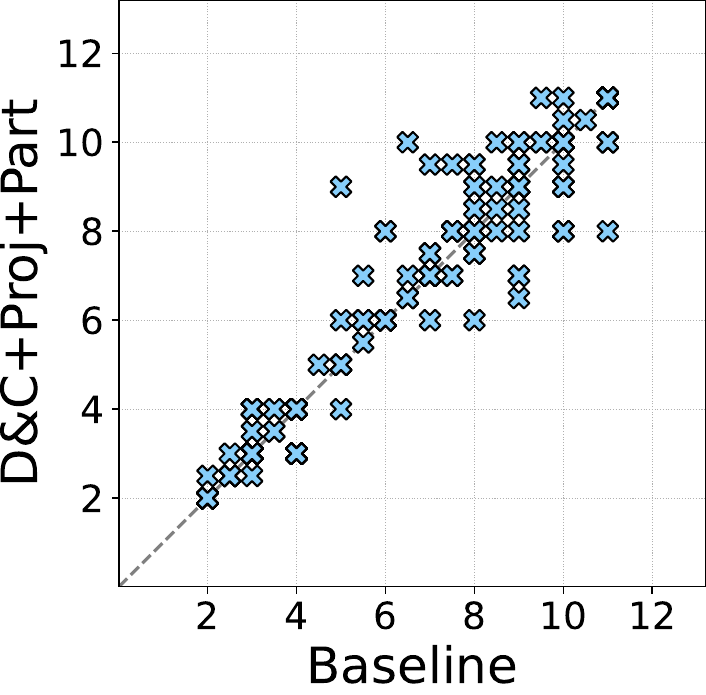}
    \end{subfigure}
    \caption{Median \T-lemma size for different strategies on problems from~\cite{micheluttiCanonicalDecisionDiagrams2024}.}
    \label{fig:tlemmas-mediansize-all-synth}

    \centering
    \begin{subfigure}{0.24\textwidth}
        \centering
        \includegraphics[width=\linewidth, alt={Scatter plot comparing the median size of \T-lemmas generated by the following strategies: baseline vs divide-and-conquer.}]
        {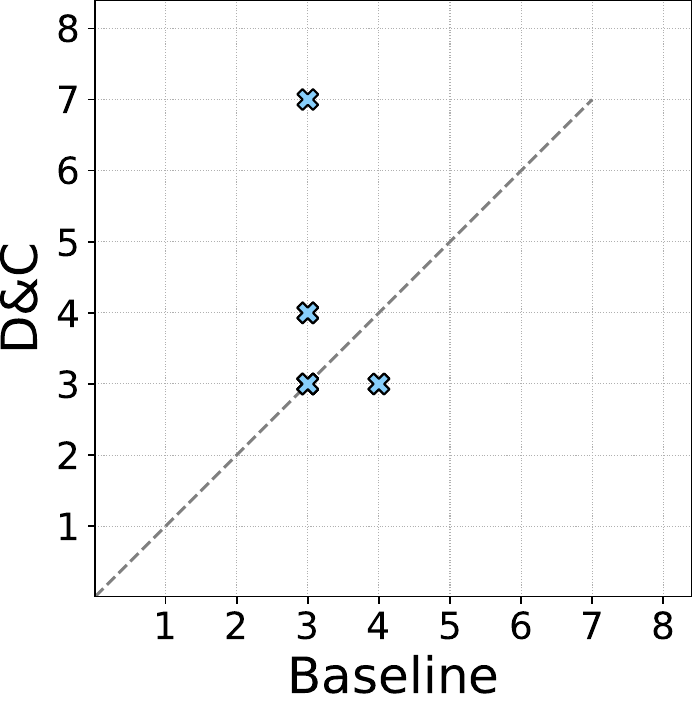}
    \end{subfigure}
    \hfill
    \begin{subfigure}{0.24\textwidth}
        \centering
        \includegraphics[width=\linewidth, alt={Scatter plot comparing the median size of \T-lemmas generated by the following strategies: divide-and-conquer vs divide-and-conquer with projection.}]
        {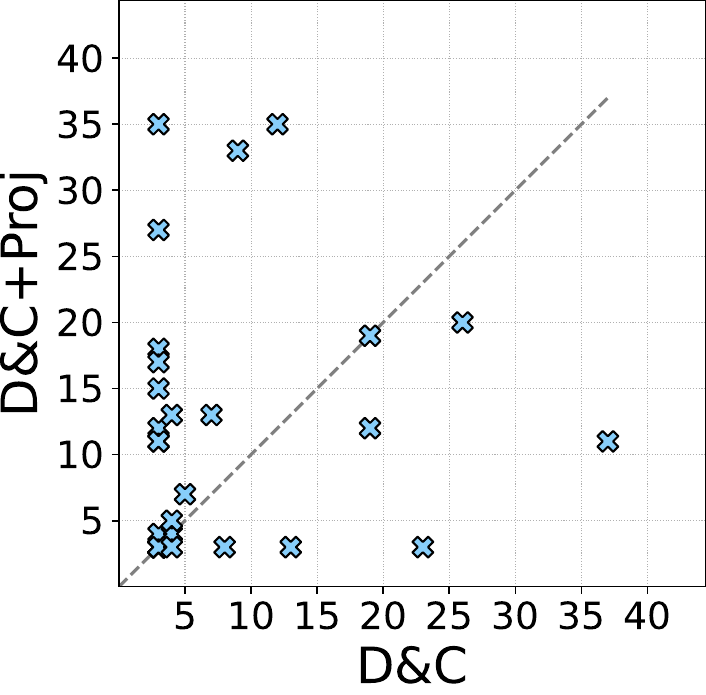}
    \end{subfigure}
    \hfill
    \begin{subfigure}{0.24\textwidth}
        \centering
        \includegraphics[width=\linewidth, alt={Scatter plot comparing the median size of \T-lemmas generated by the following strategies: divide-and-conquer with projection vs divide-and-conquer with projection and partitioning.}]
        {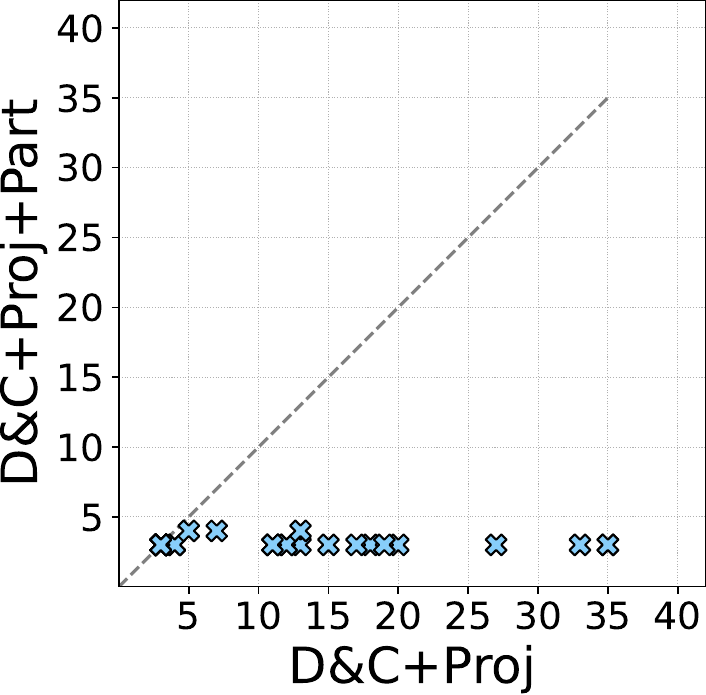}
    \end{subfigure}
    \hfill
    \begin{subfigure}{0.24\textwidth}
        \centering
        \includegraphics[width=\linewidth, alt={Scatter plot comparing the median size of \T-lemmas generated by the following strategies: baseline vs divide-and-conquer with partitioning.}]
        {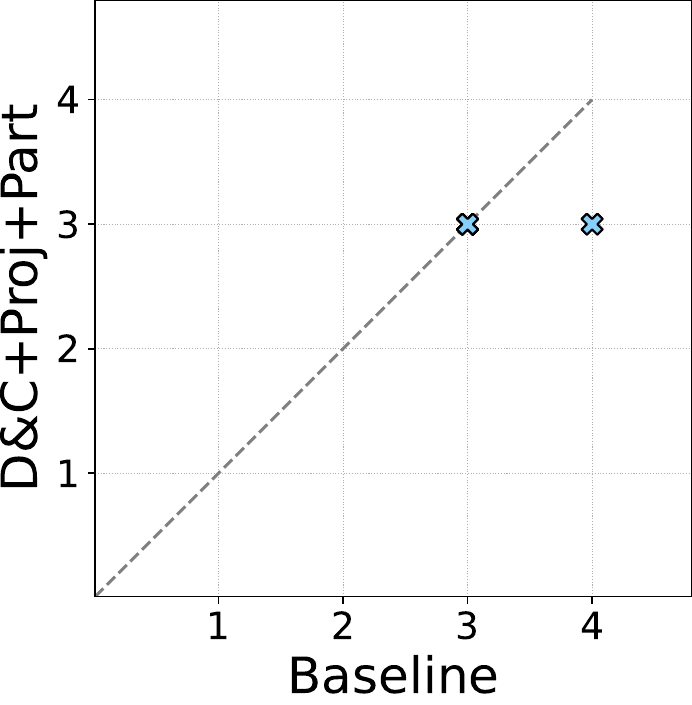}
    \end{subfigure}
    \caption{Median \T-lemma size for different strategies on planning problems.}
    \label{fig:tlemmas-mediansize-all-planning}
\end{figure}

\label{sec:experiments}

\section{Conclusions and Future Work}%
\label{sec:conclusions}
In this paper, we presented theory-agnostic techniques for enumerating a set of theory lemmas ruling out all theory-inconsistent truth assignments for a given SMT formula.
Starting from a baseline method based on total AllSMT enumeration proposed in~\cite{micheluttiCanonicalDecisionDiagrams2024}, we proposed three orthogonal, composable, techniques to enhance efficiency.

We experimentally evaluated our techniques on SMT problems
from~\cite{micheluttiCanonicalDecisionDiagrams2024}, and on a new set
of SMT formulas encoding temporal planning
problems~\cite{valentiniTemporalPlanningIntermediate2020}. Our
techniques drastically improve the efficiency of lemma enumeration
wrt.\ the baseline, allowing us to scale to much more complex instances.

This work is planned to play a crucial role in %
our ongoing research in {\em knowledge compilation modulo theories}, which we
introduced in~\cite{micheluttiCanonicalDecisionDiagrams2024} for
decision diagrams and %
in~\cite{masina-sat26-ddnnf} for general d-DNNFs.
Since in~\cite{micheluttiCanonicalDecisionDiagrams2024} the
enumeration of theory lemmas was the
major bottleneck in the compilation process, our novel lemma-enumeration techniques pave
the way for scaling to complex, real-world instances.

\FloatBarrier

\newpage

\begin{credits}
	\subsubsection{\ackname}
	We thank Alberto Griggio for his assistance with the \mathsat{} usage.
	We thank Alessandro Valentini and Andrea Micheli for their assistance with \textsc{Tempest} for generating the temporal planning benchmarks.

\end{credits}

\appendix

\crefalias{section}{appendix}
\crefalias{subsection}{appendix}

\section{Proofs of the Theorems}%
\label{app:proofs}
\subsection{Proof of \Cref{teo:lemmasout-exists}}%
\label{app:proofs:teo:lemmasout-exists}

\begin{proof}

  Since $\BCAT{\vi}\cup\TLEMMASANY$ rules out $\ITTA{\vi}$, and since $\BCAT{\vi}$ rules out only \T-satisfiable assignments on $\allalphaT{}$, 
  then every $\rhoa{}\defas\rhoaT{}\wedge\rhoaB{}\in\ITTA{\vi}$
  s.t.\ $\rhoaT{}$ is \T-unsatisfiable is ruled out by some \T-lemma
  in $\TLEMMASANY$.

  Since $\rhoaT{}\wedge\rhoaB{}$ is \T-unsatisfiable iff $\rhoaT{}$ is \T-unsatisfiable, then every $\rhoa{}\in\ITTA{\vi}$ is ruled out by some \T-lemma in $\TLEMMASANY$.

\end{proof}

\subsection{Proof of \Cref{teo:lemmasout-partitioning}}%
\label{app:proofs:teo:lemmasout-partitioning}
\begin{proof}

  Since $\BCARED{\vi}\cup\TLEMMASANY$ rules out $\ITTA{\vi}$, and since $\BCARED{\vi}$ rules out only \T-satisfiable assignments on $\allalphared{}$, 
  then every $\rhoa{}\defas\rhoared{}\wedge\rhoablu{}\in\ITTA{\vi}$
  s.t.\ $\rhoared{}$ is \T-unsatisfiable is ruled out by some \T-lemma
  in $\TLEMMASANY$.

  Consequently, due to the fact that all \T-lemmas  in $\TLEMMASANY$ have been added
  to \viprime{}, every
  $\rhoprimea{}\defas\rhoprimeared{}\wedge\rhoprimeablu{}\in\ITTA{\viprime{}}$
  is such that $\rhoprimeared{}$ is \T-satisfiable.

  Thus, since $\allalphared$ and $\allalphablu$ do not share variables or uninterpreted symbols, by \Cref{prop:partitioning}, every $\rhoprimea{}\defas\rhoprimeared{}\wedge\rhoprimeablu{}\in\ITTA{\viprime{}}$ is such that $\rhoprimeablu{}$ is \T-unsatisfiable. Since $\BCABLU{\vi}$ only rules out \T-satisfiable assignments on $\allalphablu{}$, then $\TLEMMASANYPRIME$ rules out every $\rhoprimea{}\in\ITTA{\viprime{}}$,
  that is, every $\rhoprimea{}\in\ITTA{\via{}}$ which was not already
  ruled out by the \Tlemmas in  $\TLEMMASANY$.
  
  Hence, $\TLEMMASANY\cup\TLEMMASANYPRIME$ rules out $\ITTA{\vi}$.
\end{proof}


\bibliographystyle{splncs04}
\bibliography{bibliography}
\end{document}